\newcommand{\urvh}{{\rvh}}
\newcommand{\burvg}{{{\rvbg}}}
\newcommand{\brvbz}{\bar{\rvbz}}
\newcommand{\brvby}{\bar{\rvby}}
\newcommand{\brvbv}{\bar{\rvbv}}
\newcommand{\brvbx}{\bar{\rvbx}}
\newcommand{\brvbn}{\bar{\rvbn}}
\newcommand{\hh}{\hat{\rvh}}
\newcommand{\cA}{{\mathcal{A}}}
\newcommand{\cC}{{\mathcal{C}}}
\newcommand{\cE}{{\mathcal{E}}}
\newcommand{\cJ}{{\mathcal{J}}}
\newcommand{\cK}{{\mathcal{K}}}
\newcommand{\CN}{{\mathcal{CN}}}
\newcommand{\cP}{{\mathcal{P}}}
\newcommand{\cT}{{\mathcal{T}}}
\newcommand{\al}{\alpha}
\newcommand{\g}{\gamma}
\newcommand{\eps}{\varepsilon}
\newcounter{actr}
{\begin{list}{(\alph{actr})}{\usecounter{actr}}}{\end{list}}
\newcounter{ictr}
{\begin{list}{(\roman{ictr})}{\usecounter{ictr}}}{\end{list}}
\newtheorem{remark}{Remark}
\newtheorem{thm}{Theorem}
\newtheorem{lemma}{Lemma}
\newtheorem{corol}{Corollary}
\newtheorem{prop}{Proposition}
\newtheorem{defn}{Definition}
\newenvironment{new-proof}[1]
{{\em Proof  #1: }}%
{ \noindent\qed }
\newcommand{\qed}{\rule[0.1ex]{1.4ex}{1.6ex}}
\newcommand{\defeq}{\stackrel{\Delta}{=}}
\newcommand{\mrm}{\mathrm}
\newcommand{\floor}[1]{\lfloor{#1}\rfloor}
\newcommand{\T}{{\mathrm{T}}}
\DeclareMathAlphabet{\mathbsf}{OT1}{cmss}{bx}{n}% bold sans serif
\DeclareMathAlphabet{\mathssf}{OT1}{cmss}{m}{sl}% slanted sans serif
\DeclareSymbolFont{bsfletters}{OT1}{cmss}{bx}{n}  
\DeclareSymbolFont{ssfletters}{OT1}{cmss}{m}{n}
\DeclareMathSymbol{\bsfGamma}{0}{bsfletters}{'000}
\DeclareMathSymbol{\ssfGamma}{0}{ssfletters}{'000}
\DeclareMathSymbol{\bsfDelta}{0}{bsfletters}{'001}
\DeclareMathSymbol{\ssfDelta}{0}{ssfletters}{'001}
\DeclareMathSymbol{\bsfTheta}{0}{bsfletters}{'002}
\DeclareMathSymbol{\ssfTheta}{0}{ssfletters}{'002}
\DeclareMathSymbol{\bsfLambda}{0}{bsfletters}{'003}
\DeclareMathSymbol{\ssfLambda}{0}{ssfletters}{'003}
\DeclareMathSymbol{\bsfXi}{0}{bsfletters}{'004}
\DeclareMathSymbol{\ssfXi}{0}{ssfletters}{'004}
\DeclareMathSymbol{\bsfPi}{0}{bsfletters}{'005}
\DeclareMathSymbol{\ssfPi}{0}{ssfletters}{'005}
\DeclareMathSymbol{\bsfSigma}{0}{bsfletters}{'006}
\DeclareMathSymbol{\ssfSigma}{0}{ssfletters}{'006}
\DeclareMathSymbol{\bsfUpsilon}{0}{bsfletters}{'007}
\DeclareMathSymbol{\ssfUpsilon}{0}{ssfletters}{'007}
\DeclareMathSymbol{\bsfPhi}{0}{bsfletters}{'010}
\DeclareMathSymbol{\ssfPhi}{0}{ssfletters}{'010}
\DeclareMathSymbol{\bsfPsi}{0}{bsfletters}{'011}
\DeclareMathSymbol{\ssfPsi}{0}{ssfletters}{'011}
\DeclareMathSymbol{\bsfOmega}{0}{bsfletters}{'012}
\DeclareMathSymbol{\ssfOmega}{0}{ssfletters}{'012}
\renewcommand{\defeq}{\triangleq}
\newcommand{\rve}{{\mathssf{e}}}	% e
\newcommand{\rvg}{{\mathssf{g}}}	% g
\newcommand{\rvbg}{{\mathbsf{g}}}
\newcommand{\rvh}{{\mathssf{h}}}	% h
\newcommand{\rvk}{{\mathssf{k}}}	% k
\newcommand{\rvm}{{\mathssf{m}}}	% m
\newcommand{\rvn}{{\mathssf{n}}}	% n
\newcommand{\rvbn}{{\mathbsf{n}}}
\newcommand{\rvq}{{\mathssf{q}}}	% q
\newcommand{\rvs}{{\mathssf{s}}}	% s
\newcommand{\rvu}{{\mathssf{u}}}	% u
\newcommand{\rvbu}{{\mathbsf{u}}}
\newcommand{\rvv}{{\mathssf{v}}}	% v
\newcommand{\rvbv}{{\mathbsf{v}}}
\newcommand{\rvw}{{\mathssf{w}}}	% w
\newcommand{\rvx}{{\mathssf{x}}}	% x, random variable
\newcommand{\rvbx}{{\mathbsf{x}}}
\newcommand{\rvy}{{\mathssf{y}}}	% y
\newcommand{\rvby}{{\mathbsf{y}}}
\newcommand{\rvz}{{\mathssf{z}}}	% 
\newcommand{\rvbz}{{\mathbsf{z}}}
\title{ \vspace{0.7em}   Secret-Key Agreement Capacity over Reciprocal Fading Channels: A Separation Approach}
\author{\IEEEauthorblockN{Ashish Khisti}\\
\IEEEauthorblockA{Dept. of Electrical and Computer Engineering\\
University of Toronto\\
Toronto, ON, Canada\\
Email: akhisti@comm.utoronto.ca}\thanks{This work was supported by a Discovery Grant from the Natural Sciences Engineering Research
Council (NSERC) of Canada.}}}
\begin{document}

\maketitle
\begin{abstract}
Fundamental limits of secret-key agreement over reciprocal wireless channels are investigated.
We consider a two-way block-fading channel where the channel gains in the forward and reverse links 
between the legitimate terminals are correlated.
The channel gains  between the legitimate terminals are not revealed to any terminal, 
whereas the channel gains of the eavesdropper are revealed  to the eavesdropper. 
We propose a two-phase transmission scheme, that reserves a certain portion of each coherence block for channel estimation,
and  the remainder of the coherence block for  correlated source generation.
The resulting secret-key  involves  contributions of both channel sequences  and source sequences, with the contribution of the latter
becoming dominant as the coherence period increases. We  also establish an upper bound on the secret-key capacity, which has a form structurally similar to the lower bound. Our upper and lower bounds coincide in the limit of high signal-to-noise-ratio (SNR) and large coherence period, thus establishing the secret-key agreement capacity in this asymptotic regime. Numerical results indicate that the proposed scheme  achieves significant gains over training-only schemes, even for moderate SNR and small coherence periods, thus implying the necessity of randomness-sharing in practical secret-key generation systems.
\end{abstract}

\section{Introduction}

Physical layer can provide a valuable resource for enhancing security and confidentiality of wireless communication systems.
One promising method that has received a significant interest in recent years (see e.g.,~\cite{wilsonTseScholtz:07,mathur-10,ye-10,croft,ko-uwb,pap1, jana,pap3, pap4,ZaferAS12,VasudevanGT10,mohapatra,mohapatra2,ZhangKP10,MadisehNM12,wangSu,Hamida,Shimizu,Madiseh-thesis,WallaceS10}
and the references therein) is the generation of a shared secret key  using channel reciprocity. 
When the terminals  use identical carrier frequencies, a physical reciprocity~\cite{tseViswanath:05} 
exists between uplink and downlink channels, which provides a natural mechanism for key exchange. The terminals first transmit training signals to estimate the underlying channel gains, then agree on a common sequence using error correction, and finally distill a shared secret-key from this 
common sequence. A number of experimental testbeds already demonstrate practical viability of such methods. See e..g.,~\cite{mohapatra2,Shimizu,Madiseh-thesis,WallaceS10}. 

Despite this growing interest, information theoretic limits of secret-key agreement over wireless channels have not received much attention. Consider a  block-fading channel where the channel gains are sampled once every coherence period and remain constant during the coherence block.  
Clearly, channel reciprocity alone achieves a rate that vanishes as the coherence period increases. What form of signalling within each coherence block  maximizes the secret-key rate? To our knowledge, only a recent paper~\cite{Lai2012} considers this question. The authors consider a separation based scheme. A portion of each coherence block is reserved for transmission of pilot symbols to estimate the channel coefficients. The remainder of the coherence block is used for transmission of a {\em confidential-message}.  Unfortunately the authors observe that one cannot separately tune the two phases. Any rate/power adaptation done during the message transmission phase, leaks information about the channel gains to an eavesdropper and in turn reduce the contribution from reciprocity. As such the wiretap code for confidential message transmission proposed in~\cite{Lai2012} 
does not involve any rate or power adaptation. It achieves a non-zero rate only if we impose that the eavesdropper's channel  on average be weaker than the  legitimate receiver's channel. If this condition is not satisfied, the secret-key rate achieved in~\cite{Lai2012} reduces to the rate achieved using a training only scheme and vanishes to zero as the coherence period becomes large. 

In the present paper we too consider a separation based scheme, but  substitute the confidential-message transmission phase with a different communication strategy  --- {\em randomness sharing}. Following the training phase, in each coherence block the terminals exchange i.i.d.\  random variables to generate correlated source sequences between the terminals. No power allocation is performed during this phase, yet it results in a positive secret-key rate even when the eavesdropper's channel is on average stronger than the legitimate receiver's channel.   We note that the {\em randomness sharing} scheme for secret-key generation has been investigated in several previous works (see e.g.,~\cite{maurer:93,ahlswedeCsiszar:93,CsiszarNarayan:04,Gohari:1,
Gohari:2,prabhakaran:09} and references therein).  However to the author's knowledge, the present  paper appears to be the first attempt that studies randomness sharing, in conjunction with training techniques, in non-coherent wireless channels. Furthermore our proposed scheme has several appealing features. First it achieves a rate that does not vanish in the large coherence-period limit, thus making it attractive in this practically relevant regime. Secondly our proposed separation scheme is asymptotically optimal  among all signalling schemes in the two-way fading channel.  In particular, we also establish an upper bound on the secret-key agreement capacity in the proposed setup.  The  upper bound expression is structurally similar to the lower bound expression and furthermore the difference between our upper and lower bound vanishes to zero in the limit of large coherence period and high signal-to-noise-ratio, thus establishing an asymptotic capacity result. Numerical results indicate that even for moderate SNR and relatively small coherence periods the gains resulting from randomness sharing are significant over training-only schemes. 
%Thus secret-key generation mechanisms that are based on training only may be far from optimal. 

Our proposed model also differs from reference~\cite{Lai2012} in that we only assume {\em{imperfect}} reciprocity in uplink and downlink channels i.e., we assume that the channel gains in uplink and downlink are correlated, but not identical. In general achieving perfect channel reciprocity in baseband is  challenging because  different terminals use different I/Q mixers, amplifiers and path lengths in the RF chains.
While  closed-loop calibration can be performed (see e.g.~\cite{haile,tdd-receip-1}), such methods can become challenging if the calibration needs to be done in the open air. Hence we believe that our assumption of imperfect reciprocity may perhaps be more realistic in practice. Nevertheless we also note that  our coding technique can also be applied to the case of perfect reciprocity, and improves upon the message transmission scheme~\cite{Lai2012}.

In other related works, an information theoretic framework for secure communication was pioneered by Shannon~\cite{shannon:49}. The problem of secret-key 
generation from common randomness between two legitimate terminals was introduced in~\cite{maurer:93,ahlswedeCsiszar:93}.  The setup considered
in these papers  involves one-way communication channel models and a public-discussion channel.  In followup works,  secret-key agreement capacity  over one-way fading channels with a noiseless public feedback link have been studied in e.g.~\cite{Agrawal2011,keys-arq,polar-codes-sec,arq-2,tang-liu}.
References~\cite{Khisti-For2011,Khisti-Globecom2011,khisti:10} study secret-key agreement over a channel with two-sided state sequence, which has application to  fading
channels. However these works do not incorporate the cost of acquiring channel state information which is of critical importance in the proposed setup.

\section{System Model and Main Results}
\label{sec:model}

\begin{figure}
\centering
\includegraphics[scale=0.7]{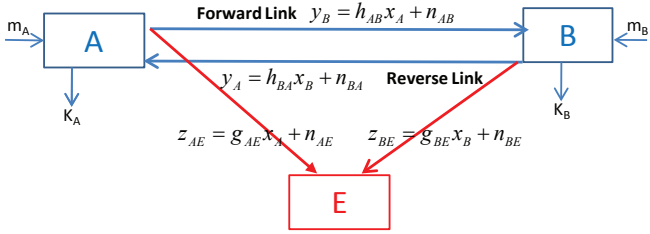}
\caption{Problem Setup. There are two legitimate terminals and one eavesdropper terminal. The legitimate terminals $A$ and $B$
 are required to agree on a common secret-key. The fading gains $\rvh_{AB}$ and $\rvh_{BA}$ are not revealed to any terminals, whereas
 the $\rvg_{AE}$ and $\rvg_{BE}$ are revealed to the eavesdropper.}
\label{fig:model}
\end{figure}
As illustrated in Fig.~\ref{fig:model}, we consider a setup with two legitimate terminals $A$ and $B$ and one eavesdropper $E$.
The terminals $A$ and $B$ communicate over a two-way non-coherent  wireless channel:
\allowdisplaybreaks{\begin{align}
& \rvy_B(t) \!=\! \rvh_{AB}(t) \rvx_A(t) \!+\! \rvn_{B}(t),~\rvy_A(t) \!=\! \rvh_{BA}(t) \rvx_B(t)\! +\! \rvn_{A}(t), \label{eq:model-2}\\
&\rvz_{AE}(t) \!\!=\!\! \rvg_{AE}(t) \rvx_A(t)\! +\! \rvn_{AE}(t),~
\rvz_{BE}(t)\!\! =\!\! \rvg_{BE}(t) \rvx_B(t) \!+\! \rvn_{BE}(t)\label{eq:model-4}
\end{align}}where $t\in\{1,\ldots, N\}$ denotes the time index, $\rvy_A(t)$ and $\rvy_B(t)$
denote the output symbols at terminals $A$ and $B$ and $\{\rvz_{AE}(t),\rvz_{BE}(t) \}$
denote the output symbols at terminal $E$.  The input symbols generated by terminals $A$
and $B$ at time $t$ are denoted by $\rvx_A(t)$ and $\rvx_B(t)$ respectively and are required to 
satisfy the average power constraints:
\begin{align*}
\frac{1}{N}\sum_{t=1}^N E[|\rvx_A(t)|^2] \le P, \quad \frac{1}{N}\sum_{t=1}^N E[|\rvx_B(t)|^2] \le P\end{align*}
The channel gains $(\rvh_{AB}, \rvh_{BA},\rvg_{AE}, \rvg_{BE})$ are drawn
from a  distribution $p({\rvh_{AB}, \rvh_{BA}})p(\rvg_{AE}, \rvg_{BE})$
once every $T$ consecutive symbols and stay constant over this period i.e., the
channel gains remain constant in the interval $t \in [iT+1, (i+1)T]$ for $i = 0,1,\ldots, \frac{N}{T}-1$.
We assume that the channel gains $\rvh_{AB}(t)$ and $\rvh_{BA}(t)$ are not revealed to any of the
terminals apriori, whereas the channel gains $(\rvg_{AE}(t), \rvg_{BE}(t))$ are revealed to the eavesdropper terminal.
The additive noise variables are drawn from an i.i.d.\ $\CN(0,1)$ distribution.
When computing the achievable rates, for convenience, we will assume that the channel gains are all  drawn from the Gaussian $\CN(0,1)$
distribution, although our results are not tied to this assumption. 
\begin{remark}
While we assume  that the channel gains of the eavesdropper are independent of $(\rvh_{AB},\rvh_{BA})$,
our results naturally extend to the case when the channel gains are drawn 
from a joint distribution $p(\rvh_{AB}, \rvh_{BA}, \rvg_{AE}, \rvg_{BE})$. 
%Note that we assume a full duplex channel model as both terminals $A$
%and $B$ transmit and receive at the same time. The ``half-duplex"
%case where only one of the terminals can transmit at a give time is also of interest, but will not be treated in this paper. 
Also note that the eavesdropper observes the transmission from the two terminals $A$ and $B$ over non-interfering links. 
An eavesdropper who observes a superposition of the two signals can only be weaker than the proposed eavesdropper.
\end{remark}

\iffalse
We will denote the sequence of length $N$ observed by terminal $A$  by $\rvy_A^N$ and use a similar notation for other
input and output channel symbols. The sequence $\urvh_{AB}^K$ is used to denote the $K=\frac{N}{T}$ unique coefficients
on the channel between $A$ to $B$ and a similar notation is used for the other channel gains. However we will also use 
the notation $\rvh_{AB}(i)$ to denote the channel gain at time $i$, which maps to  the $\floor{\frac{i}{T}}+1$ index in the sequence $\urvh_{AB}^K$.
We will use the notation $\rvbz(t) = \{\rvz_{AE}(t), \rvz_{BE}(t)\}$ to denote both the output symbols at terminal $E$
and $\rvbg(i) \defeq (\rvg_{AE}(i), \rvg_{BE}(i))$.
%and likewise use $\rvbz^N$ to denote both the output symbols and $\rvbg^K$ will be used to denote $(\rvg_{AE}^K, \rvg_{BE}^K)$.
\fi

\begin{defn}[Secret-Key Capacity]
A feasible secret-key generation protocol is defined as follows. Terminals $A$ and $B$ sample independent random
variables $\rvm_A$ and $\rvm_B$ from a product distribution $p(\rvm_A)p(\rvm_B)$. At time $t,$ terminals
$A$ and $B$ generate the symbols $\rvx_A(t) = f_A(\rvm_A, \rvy_A^{t-1})$ and $\rvx_B(t) = f_B(\rvm_B, \rvy_B^{t-1})$.
At the end of $N$ channel uses, the terminals $A$ and $B$ generate secret keys $\rvk_A$ and $\rvk_B$ respectively
using the functions $\rvk_A = \cK_A(\rvy_A^N, \rvm_A)$ and $\rvk_B = \cK_B(\rvy_B^N, \rvm_B)$. We require that 
$\Pr(\rvk_A \neq \rvk_B) \le \eps_N$ and furthermore $\frac{1}{N}I(\rvk_A; \rvbz^N,\rvbg^K) \le \eps_N$ for some sequence
$\eps_N$ that goes to zero as $N \rightarrow\infty$. The largest achievable rate $R = \frac{1}{N}H(\rvk_A)$ is denoted as the
{\em secret-key capacity}. $\hfill\Box$
\end{defn}
%\section{Main Results}
%\label{sec:results}

As our main result, we establish upper and lower bounds on the secret-key capacity for the two-way non-coherent channel model. 

\subsection{Upper Bound}

\label{subsec-UB}
\begin{thm}
An upper-bound on the secret-key capacity of the two-way non-coherent  fading channel is, 
\begin{multline}
C \le R^+ = \frac{1}{T} I(\rvh_{AB}; \rvh_{BA})+ \\ \max_{\{P(\rvh_{AB})\} \in \cP_{AB}} \left\{I( \rvy_B; \rvx_A| \rvh_{AB}, \rvz_{AE}, \rvg_{AE})\right\}
\\ + \max_{\{P(\rvh_{BA})\} \in \cP_{BA}} \left\{{I( \rvy_A;\rvx_B|\rvh_{BA},\rvz_{BE}, \rvg_{BE}) }\right\}, \label{eq:R+}
\end{multline}
where the above expression is evaluated for \begin{align*}\rvx_A \sim \CN(0, P({\rvh_{AB}})),~
\rvx_B \sim\CN(0, P(\rvh_{BA})),\end{align*} and the set
$\cP_{AB}$ is the set of all non-negative power allocation functions $P({\rvh_{AB}})$ that satisfy the average power constraint
$E[P(\rvh_{AB})] \le P$ and the set $\cP_{BA}$ is defined similarly. 
\label{thm:ub}
\end{thm}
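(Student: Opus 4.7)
The plan is to adapt the classical Maurer upper bound for secret-key capacity via a genie-aided argument that reveals each legitimate terminal's incoming channel coefficients, and then to single-letterize the resulting expression. I would first apply Fano's inequality to the reliability constraint $\Pr(\rvk_A \neq \rvk_B) \le \eps_N$, combine it with the secrecy constraint $I(\rvk_A; \rvbz^N, \rvbg^K) \le N\eps_N$, and invoke the data-processing inequality (using that $\rvk_A$ and $\rvk_B$ are respectively functions of $(\rvm_A, \rvy_A^N)$ and $(\rvm_B, \rvy_B^N)$) to establish the starting inequality
\begin{equation*}
NR \le I(\rvm_A, \rvy_A^N; \rvm_B, \rvy_B^N \mid \rvbz^N, \rvbg^K) + O(N\eps_N).
\end{equation*}

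Next, I would invoke a genie that asymmetrically reveals each legitimate terminal's incoming channel---$\urvh_{BA}^K$ to $A$ and $\urvh_{AB}^K$ to $B$---which can only enlarge the right-hand side since the eavesdropper receives no additional information. A chain-rule expansion of the enlarged mutual information $I(\rvm_A, \rvy_A^N, \urvh_{BA}^K; \rvm_B, \rvy_B^N, \urvh_{AB}^K \mid \rvbz^N, \rvbg^K)$ separates it into a reciprocity contribution of the form $I(\urvh_{BA}^K; \urvh_{AB}^K \mid \rvbz^N, \rvbg^K)$, a coherent two-way channel contribution of the form $I(\rvm_A, \rvy_A^N; \rvm_B, \rvy_B^N \mid \urvh^K, \rvbz^N, \rvbg^K)$, and cross terms. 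The block-iid structure of the fading and the independence of $(\urvh_{AB}^K, \urvh_{BA}^K)$ from $\rvbg^K$ give an upper bound of $K \cdot I(\rvh_{AB}; \rvh_{BA})$ on the reciprocity term. For the coherent-channel term, conditioning on $\urvh^K$ renders each sub-channel Gaussian with known state, and a standard chain-rule-over-time argument paired with a maximum-entropy bound shows that circularly-symmetric complex Gaussian inputs $\rvx_A \sim \CN(0, P(\rvh_{AB}))$ and $\rvx_B \sim \CN(0, P(\rvh_{BA}))$ are extremal, producing the two per-symbol mutual information expressions of the theorem statement. Jensen's inequality applied to the average power constraint $\frac{1}{N}\sum_t E[|\rvx_A(t)|^2] \le P$ then yields the maximization over $\cP_{AB}$ (and analogously over $\cP_{BA}$). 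Dividing by $N = KT$ and letting $\eps_N \to 0$ produces the desired bound $R^+$.

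The main obstacle will be controlling the cross terms produced by the chain-rule expansion, such as $I(\rvm_A, \rvy_A^N; \urvh_{AB}^K \mid \urvh_{BA}^K, \rvbz^N, \rvbg^K)$, which are not automatically zero: since the inputs $\rvx_A^n = f_A(\rvm_A, \rvy_A^{n-1})$ depend adaptively on past observations, information about the cross-channel $\urvh_{AB}^K$ can leak into terminal $A$'s observations via $B$'s subsequent transmissions (which in turn depend on $\rvy_B^{n-1}$ and hence on $\urvh_{AB}^K$). In the high-SNR regime such leakage is not small---indeed, it allows $A$ to partially estimate $\urvh_{AB}^K$---so I plan to absorb it into the coherent-channel contribution through a per-block analysis that exploits the bounded differential entropy of the Gaussian channel outputs per coherence interval rather than attempting to bound the cross term by zero. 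A secondary subtlety is that $I(\urvh_{BA}^K; \urvh_{AB}^K \mid \rvbz^N, \rvbg^K)$ need not equal $I(\urvh_{BA}^K; \urvh_{AB}^K)$ exactly; this gap can be controlled using the independence $(\urvh_{AB}^K, \urvh_{BA}^K) \perp \rvbg^K$ combined with the linear-Gaussian dependence of $\rvbz^N$ on the inputs and channel gains.
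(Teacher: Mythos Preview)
Your high-level plan---Fano plus secrecy, data processing, a genie that reveals $\urvh_{BA}^K$ to $A$ and $\urvh_{AB}^K$ to $B$, then single-letterization with Gaussian optimality and Jensen---matches the paper exactly. The divergence, and the gap, is in how you propose to single-letterize.

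You propose to split the genie-aided mutual information by first pulling out the reciprocity term $I(\urvh_{BA}^K;\urvh_{AB}^K\mid\rvbz^N,\rvbg^K)$ and then handling a coherent two-way term plus cross terms. You correctly flag that the cross terms such as $I(\rvm_A,\rvy_A^N;\urvh_{AB}^K\mid\urvh_{BA}^K,\rvbz^N,\rvbg^K)$ are not small (they scale like $K$ at high SNR), and that the reciprocity term conditioned on $\rvbz^N$ need not equal the unconditional one. Your proposed fix---``absorb the cross terms into the coherent-channel contribution through a per-block analysis using bounded differential entropy''---is not concrete enough to be assessed, and I do not see how to make it work without either double-counting or losing the tight $\frac{1}{T}I(\rvh_{AB};\rvh_{BA})$ coefficient. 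Both difficulties are real artifacts of separating the channel-state contribution \emph{before} the time reduction.

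The paper avoids both issues by reversing the order: it keeps the channel states inside the mutual information and instead peels off one time index at a time. Writing $J_N=I(\rvm_A,\urvh_{BA}^K,\rvy_A^N;\rvm_B,\urvh_{AB}^K,\rvy_B^N\mid\rvbz^N,\rvbg^K)$, the chain rule in both coordinates gives four pieces. Three are handled directly: the two ``one-step channel'' pieces are bounded by $I(\rvx_A(N);\rvy_B(N)\mid\rvh_{AB}(N),\rvz_{AE}(N),\rvg_{AE}(N))$ and its $B\!\to\!A$ analogue using only the Markov structure of the channel and that $\rvx_A(N),\rvx_B(N)$ are functions of the past; the fourth piece $I(\rvy_A(N);\rvy_B(N)\mid\cdot)$ vanishes for the same reason. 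The crucial step is showing that the remaining piece satisfies $J_{N-1}^{(\rvbz^N)}\le J_{N-1}^{(\rvbz^{N-1})}$, i.e.\ the conditioning on $\rvz_{AE}(N),\rvz_{BE}(N)$ can be dropped once $\rvy_A(N),\rvy_B(N)$ are gone; this uses that $\rvz_{AE}(N)$ depends on everything else only through $(\rvx_A(N),\rvg_{AE}(N))$ and $\rvx_A(N)$ is already determined by one side of the mutual information. Iterating down to $N=0$ leaves $I(\rvm_A,\urvh_{BA}^K;\rvm_B,\urvh_{AB}^K\mid\rvbg^K)=K\,I(\rvh_{AB};\rvh_{BA})$ exactly---no residual conditioning on $\rvbz^N$ and no cross terms ever appear. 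So the recursion is precisely the device that dissolves both obstacles you identified; your decomposition creates them.
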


\begin{remark}
The upper bound expression in Theorem~\ref{thm:ub} has a natural interpretation. The term $\frac{1}{T}I(\rvh_{AB}; \rvh_{BA})$ is the contribution arising from reciprocal channel gains. The second term is the secret key rate achieved over a one way fading channel
\begin{align}
\rvy_B= \rvh_{AB}\rvx_A + \rvn_{B}, \quad \rvz_{AE}=\rvg_{AE}\rvx_A + \rvn_{AE}\label{eq:fading-oneway}
\end{align}
and a public discussion channel in the reverse link, when the fading chain $\rvh_{AB}$ is revealed to all the terminals
and the channel gain $\rvg_{AE}$ is revealed to the eavesdropper. Likewise the third term
is the contribution arising from the one-way fading channel in the reverse link. The upper bound expression~\eqref{eq:R+} indicates that the contribution of these three terms is additive. $\hfill\Box$
\end{remark}

The proof of Theorem~\ref{thm:ub} is provided in section~\ref{sec:ub}.

\subsection{Lower Bound - Public Discussion}
\begin{figure}
\includegraphics[scale=0.3]{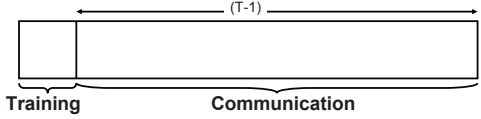}
\caption{Separation-based approach for secret-key generation. In each coherence block of length $T$ symbols, the first symbol is dedicated to training and a power $P_1$ is used i.e., $\rvx_A(iT+1)= \rvx_B(iT+1)=\sqrt{P_1}$. The remaining ${T-1}$ symbols are used for randomness generation. Each symbol is sampled i.i.d.\  $\CN(0,P_2)$ in this phase.}
\label{fig:separation}
\end{figure}

Our proposed coding scheme involves a separation based approach. As shown in Fig.~\ref{fig:separation}, the first symbol in each coherence
block is reserved for sending a training symbol and a total power of $P_1$ is used in this phase. This allows  the terminals  $B$ and $A$ to generate linear minimum
mean squared error estimates of the channel gains $\rvh_{AB}$ and $\rvh_{BA}$ respectively i.e.,
 \begin{align}
 \hh_{AB} = \al \rvh_{AB} + \rve_{AB} \label{eq:hh-AB} \\
 \hh_{BA} = \al \rvh_{BA} + \rve_{BA},\label{eq:hh-BA}
 \end{align}
 where $\al = \frac{P_1}{P_1 + 1}$ is the estimation coefficient, and $\rve_{AB}$ and $\rve_{BA}$ both have the distribution $\CN\left(0, \al(1-\al)\right)$ and are independent of the channel gains
 $\rvh_{AB}$ and $\rvh_{BA}$ respectively.
 
 For the reminder of coherence block, both the terminals transmit
 i.i.d.\ symbols from $\CN(0, P_2)$ to generate correlated source sequences i.e., in coherence period $i$ each component of the vector $\brvbx_A(i) \in {\mathbb C}^{T-1}$  is generated i.i.d.\ $\CN(0,P_2)$ and similarly each component of the vector $\brvbx_B(i)$ is i.i.d.\ $\CN(0,P_2)$.
 The corresponding received vectors are given by
 \begin{align}
 \brvby_A(i)\!\! =\!\! \rvh_{BA}(i) \brvbx_B(i)\! +\! \brvbn_{BA}(i), \brvby_B(i)\!\! =\!\! \rvh_{AB}(i)\brvbx_A(i)\! +\! \brvbn_{AB}(i).
 \end{align}
 At the end of $K$ such coherence blocks, as indicated in Table~\ref{tab:seq}, terminal $A$ has access to $(\hh_{BA}^K, \brvbx_A^K, \brvby_A^K)$ 
 whereas terminal $B$ has access to $(\hh_{AB}^K, \brvbx_B^K, \brvby_B^K)$. The eavesdropper observes $(\rvg_{AE}^K, \rvg_{BE}^K, \brvbz_{AE}^K, \brvbz_{BE}^K)$.
 \begin{table}[htdp]
\caption{Correlated Source and Channel Sequences at the Terminals}
\begin{center}
\begin{tabular}{|c|c|c|c|}\hline
 Sequence/Terminal &  A & B & E \\\hline
 Channel Sequence & $\hh_{BA}^K$ & $\hh_{AB}^K$ & $(\rvg_{AE}^K, \rvg_{BE}^K)$\\\hline
 Source Sequence - I & $\brvbx_A^K$ & $\brvby_B^K$ & $\brvbz_{AE}^K$ \\\hline
 Source Sequence - II & $\brvby_A^K$ & $\brvbx_B^K$ & $\brvbz_{BE}^K$\\\hline
\end{tabular}
\end{center}
\label{tab:seq}
\end{table}%

These correlated sequences are in turn used to generate a common secret-key by exchanging suitable public messages in the reconciliation phase~\cite{maurer:93,ahlswedeCsiszar:93}.   As indicated, in this section we assume that these public-messages for secret-key generation are transmitted over an external 
 public discussion channel. The resulting rate-expression is simpler and has a form that can be immediately compared with the upper bound.
 Subsequently we will remove this assumption and consider the transmission of public messages directly over the wireless channel.
\begin{thm}
\label{thm:disc-lb}
An achievable secret-key rate when an additional public discussion channel of arbitrarily high rate is available communication is given by
\begin{multline}
R^-_{PD} = \frac{1}{T}I(\hh_{AB}; \hh_{BA}) + \\
\frac{T-1}{T}\left\{I( \rvy_B;\rvx_A,\hh_{AB},\hh_{BA})-I(\rvy_B; \rvz_{AE}, \rvg_{AE},\rvh_{AB})\right\} + \\
\frac{T-1}{T}\left\{I(\rvy_A;\rvx_B,\hh_{BA},\hh_{BA})- I(\rvy_A;  \rvz_{BE}, \rvg_{BE},\rvh_{BA})\right\},\label{eq:lb-disc}
\end{multline}
where we evaluate the expression for $\rvx_A \sim \CN(0, P_2)$ and $\rvx_B \sim \CN(0, P_2)$ and $\hh_{AB}$ and $\hh_{BA}$ are specified in~\eqref{eq:hh-AB}
and~\eqref{eq:hh-BA} respectively and  $P_1$ and $P_2$  are non-negative and satisfy the relation 
\begin{align}
P_1 + (T-1) P_2 \le T P.
\label{eq:p-cons}
\end{align}
\end{thm}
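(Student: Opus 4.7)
My strategy is to view the $K$ coherence blocks as $K$ i.i.d.\ realizations of a super-source at the three terminals, invoke the Maurer/Ahlswede--Csisz\'ar lower bound for secret-key agreement with unlimited public discussion, and then expand and simplify the resulting mutual-information expressions to match the three additive terms in~\eqref{eq:lb-disc}.

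First I would formalize the two-phase scheme from Fig.~\ref{fig:separation}: a single deterministic training symbol of power $P_1$ followed by $T-1$ i.i.d.\ $\CN(0,P_2)$ randomness-sharing symbols per coherence block at each terminal, verify the average power constraint~\eqref{eq:p-cons}, and collect the per-block observations from Table~\ref{tab:seq} into tuples $X_A=(\hh_{BA},\brvbx_A,\brvby_A)$, $X_B=(\hh_{AB},\brvby_B,\brvbx_B)$, $X_E=(\rvg_{AE},\rvg_{BE},\brvbz_{AE},\brvbz_{BE})$. Since channel and noise realizations are independent across blocks, the super-source is i.i.d., so the classical lower bound $C \geq I(X_A;X_B) - I(X_B;X_E)$ applies on a per-block basis and dividing by $T$ converts to a per channel use rate.

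Next, I would expand $I(X_A;X_B)-I(X_B;X_E)$ using the chain rule together with the key independence/Markov properties of the model: the training noises in $\hh_{AB},\hh_{BA}$ (cf.~\eqref{eq:hh-AB}--\eqref{eq:hh-BA}) are independent of all source-phase variables; forward observations depend only on $(\rvh_{AB},\rvg_{AE},\brvbx_A)$ and reverse only on $(\rvh_{BA},\rvg_{BE},\brvbx_B)$; the source symbols $\brvbx_A,\brvbx_B$ are mutually independent and independent of the channel gains; and $(\rvg_{AE},\rvg_{BE})$ are independent of $(\rvh_{AB},\rvh_{BA})$. These identities let $I(X_A;X_B)$ be rewritten as a reciprocity contribution $I(\hh_{AB};\hh_{BA})$ per block (giving the $\tfrac{1}{T}$ factor) plus $T-1$ i.i.d.\ per-symbol contributions in each of the forward and reverse directions (giving the $\tfrac{T-1}{T}$ factors). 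To obtain the specific leakage form $I(\rvy_B;\rvz_{AE},\rvg_{AE},\rvh_{AB})$ rather than one involving the noisy estimates, I would invoke data processing along the Markov chain $(\rvy_B,\rvz_{AE},\rvg_{AE})-(\rvh_{AB},\rvh_{BA})-(\hh_{AB},\hh_{BA})$ together with the observation that $\rvy_B$ is conditionally independent of $\rvh_{BA}$ given $(\rvh_{AB},\rvz_{AE},\rvg_{AE})$, which collapses $(\rvh_{AB},\rvh_{BA})$ down to $\rvh_{AB}$. This weakens the bound slightly but produces exactly the expression in the theorem; a symmetric argument handles the reverse term.

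The main obstacle is the chain-rule decomposition of $I(X_A;X_B)$ into the clean additive three-term form. Because $\hh_{AB},\hh_{BA}$ share the underlying channels $\rvh_{AB},\rvh_{BA}$ with both forward and reverse source-phase outputs, the three pieces are not unconditionally independent. The decomposition succeeds only because, conditional on $(\rvh_{AB},\rvh_{BA})$, the training noise is independent of the source-phase observations and the forward and reverse directions decouple; carefully bookkeeping how these conditional independencies combine through successive applications of the chain rule to yield exactly the additive form of~\eqref{eq:lb-disc} is the delicate technical step.
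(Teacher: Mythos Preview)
Your plan has a genuine gap: the one–way Maurer/Ahlswede--Csisz\'ar bound $I(X_A;X_B)-I(X_B;X_E)$ does \emph{not} deliver the reverse-direction term in the form stated in the theorem. With only $B$ communicating, the reverse contribution is governed by $B$'s data $\brvbx_B$, so the leakage you get is (essentially) $I(\rvx_B;\rvz_{BE},\rvg_{BE})$, not $I(\rvy_A;\rvz_{BE},\rvg_{BE},\rvh_{BA})$. Because $\rvy_A\!\to\!\rvx_B\!\to\!\rvz_{BE}$ (given the channel gains), one has $I(\rvy_A;\rvz_{BE},\ldots)\le I(\rvx_B;\rvz_{BE},\ldots)$, so your bound is strictly weaker than~\eqref{eq:lb-disc} in general. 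The same issue prevents the positive term $I(\rvy_B;\rvx_A,\hh_{AB},\hh_{BA})$ from appearing with \emph{both} channel estimates: $\hh_{BA}$ lives at terminal $A$, so terminal $B$ can only condition on it after $A$ has communicated.

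The paper's proof is intrinsically two–way. Both terminals send Slepian--Wolf bin indices so that each recovers the common tuple $(\hh_{AB},\hh_{BA},\brvby_A,\brvby_B)$: $A$ bins $\hh_{BA}$ and $\brvby_A$ (decodable at $B$ from $(\hh_{AB},\brvbx_B)$), and $B$ bins $\hh_{AB}$ and $\brvby_B$ (decodable at $A$ from $(\hh_{BA},\brvbx_A)$). The key rate is then the equivocation $H(\hh_{AB},\hh_{BA},\brvby_A,\brvby_B\mid \brvbz,\rvbg,\phi_A,\phi_B,\psi_A,\psi_B)$, which factors exactly into the three additive pieces of~\eqref{eq:lb-disc} once one (i) uses independence of $(\hh_{AB},\hh_{BA})$ from the eavesdropper's variables, (ii) conditions on the true gains $(\rvh_{AB},\rvh_{BA})$ to upper-bound leakage (this is where $\rvh_{AB}$ enters the second term), and (iii) uses the i.i.d.\ structure within the block to pull out the $(T-1)$ factor. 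The crucial point is that the reverse-direction key material is $\brvby_A$, binned by $A$, which is why the leakage is $I(\rvy_A;\rvz_{BE},\rvg_{BE},\rvh_{BA})$ rather than the larger $I(\rvx_B;\rvz_{BE},\rvg_{BE})$. Your decomposition sketch cannot reproduce this without letting $A$ speak.
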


A proof of Theorem~\ref{thm:disc-lb} is sketched in Section~\ref{sec:disc-lb}.

\begin{remark}
The  lower bound expression~\eqref{eq:lb-disc}  differs in the following respects from the upper bound expression~\eqref{eq:R+}:
(i) the channel gains $\rvh_{AB}$ and $\rvh_{BA}$ in the first term are replaced by their estimates $\hh_{AB}$ and $\hh_{BA}$ respectively;
(ii) the second and third terms are scaled by a factor of $\left(1-\frac{1}{T}\right)$; 
(iii) in computing the second and third terms, we assume that the legitimate receivers have access to the channel estimates whereas the eavesdropper has access to perfect channel states;
(iv) power allocation across the channel gains is not performed in the lower bound expression.
\end{remark}

An explicit evaluation of the lower bound
when that the channel gains $\rvh_{AB}$ and $\rvh_{BA}$ are jointly Gaussian, zero mean random variables with a cross-correlation of $\rho$
is as follows.

\begin{prop}
\label{prop:pdisc}
An achievable secret-key rate when terminals $A$ and $B$ have access to a public discussion channel is given by:
\begin{align}
R_{PD}^- = \frac{1}{T}R_{P,T}^- + \frac{T-1}{T}\left(R_{P, AB}^- + R_{P, BA}^-\right)\label{eq:pdisc}
\end{align}
where the expressions for $R_{P,T}^-,$ $R_{P,AB}^-$ and $R_{P, BA}^-$ are given by:
\begin{align}
R_{P,T}^- &= -\log\left(1-\al^2\rho^2\right) \label{eq:Rp_T}\\
R_{P, AB}^- &=  E\left[\log\left(\!\!1\!\! + \!\!\frac{P_2 |\rvh_{AB}|^2}{1+ P_2 |\rvg_{AE}|^2}\right)\right]  - \log\left(\!1 \!\!+\!\! \frac{P_2}{1+P_1}\!\right) \label{eq:Rp_AB}\\
R_{P,BA}^- &=E\left[\log\left(1\!\! +\!\! \frac{P_2 |\rvh_{BA}|^2}{1+ P_2 |\rvg_{BE}|^2}\right)\right]  - \log\left(\!1 \!\!+ \!\!\frac{P_2}{1+P_1}\!\right)\label{eq:Rp_BA}. 
\end{align} and $\al = \frac{P_1}{1+P_1}$. $\hfill\Box$
\end{prop}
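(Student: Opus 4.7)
The plan is to specialize $R^-_{PD}$ from Theorem~\ref{thm:disc-lb} to the jointly Gaussian case and to obtain a closed-form lower bound on each of the three constituent terms; since Theorem~\ref{thm:disc-lb} is an achievability result, any such lower bound is also achievable. For the training term, under the MMSE model (3)--(4) the pair $(\hh_{AB}, \hh_{BA})$ is jointly circular-complex Gaussian with variance $\alpha$ each and cross-covariance $\alpha^2\rho$. The Gaussian mutual-information formula then gives $I(\hh_{AB};\hh_{BA}) = -\log(1-\alpha^2\rho^2) = R^-_{P,T}$, establishing the first summand in (9).

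For the second term (the third being symmetric), I would first establish the Markov chain $(\hh_{AB},\hh_{BA}) \leftrightarrow \rvh_{AB} \leftrightarrow \rvy_B$: conditional on $\rvh_{AB}$, the estimates $\hh_{AB},\hh_{BA}$ depend only on the independent random variables $\rve_{AB},\rve_{BA}$ and on $\rvh_{BA}$, all of which are independent of $(\rvx_A,\rvn_B)$ that form $\rvy_B$. Applying the chain rule together with this Markov property, and noting that $h(\rvy_B|\rvh_{AB})$ cancels between the two mutual informations, the difference reduces to
\[
I(\rvy_B;\rvx_A,\hh_{AB},\hh_{BA}) - I(\rvy_B;\rvz_{AE},\rvg_{AE},\rvh_{AB}) = h(\rvy_B|\rvz_{AE},\rvh_{AB},\rvg_{AE}) - h(\rvy_B|\rvx_A,\hh_{AB},\hh_{BA}).
\]
The first entropy is a standard Gaussian wiretap MMSE computation: given $(\rvh_{AB},\rvg_{AE})$, the pair $(\rvy_B,\rvz_{AE})$ is jointly Gaussian via $\rvx_A \sim \CN(0,P_2)$, and $\var(\rvy_B|\rvz_{AE},\rvh_{AB},\rvg_{AE}) = 1 + P_2|\rvh_{AB}|^2/(1+P_2|\rvg_{AE}|^2)$, giving $\log(\pi e) + E[\log(1 + P_2|\rvh_{AB}|^2/(1+P_2|\rvg_{AE}|^2))]$. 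For the second entropy, decompose $\rvh_{AB} = E[\rvh_{AB}|\hh_{AB},\hh_{BA}] + \tilde h$ with residual variance $\sigma^2 = \var(\rvh_{AB}|\hh_{AB},\hh_{BA})$; conditioning on $\rvx_A,\hh_{AB},\hh_{BA}$ then gives $\rvy_B \sim \CN(\cdot,\, 1+\sigma^2|\rvx_A|^2)$, so the entropy equals $\log(\pi e) + E[\log(1 + \sigma^2|\rvx_A|^2)]$.

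A closed-form bound then follows from two inequalities. First, by monotonicity of MMSE under additional conditioning, $\sigma^2 \leq \var(\rvh_{AB}|\hh_{AB}) = 1-\alpha = 1/(1+P_1)$. Second, applying Jensen's inequality to the concave map $u \mapsto \log(1+\sigma^2 u)$ with $E|\rvx_A|^2 = P_2$ yields $E[\log(1+\sigma^2|\rvx_A|^2)] \leq \log(1+\sigma^2 P_2) \leq \log(1 + P_2/(1+P_1))$. Combining, the second term of Theorem~\ref{thm:disc-lb} is at least $R^-_{P,AB}$ from (12). An identical argument applied to the reverse link produces $R^-_{P,BA}$ from (13), and summing the three pieces with weights $1/T$ and $(T-1)/T$ yields (9). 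The main obstacle is the Markov-chain-based cancellation that isolates the two entropies admitting closed forms; once the problem is reduced to bounding $E[\log(1+\sigma^2|\rvx_A|^2)]$, the MMSE monotonicity and the Jensen step are routine.
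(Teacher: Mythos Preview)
Your proof is correct and follows essentially the same approach as the paper. Both reduce the mutual-information difference to $h(\rvy_B|\rvz_{AE},\rvh_{AB},\rvg_{AE}) - h(\rvy_B|\rvx_A,\hh_{AB},\hh_{BA})$, compute the first entropy exactly, and upper bound the second by a Gaussian calculation yielding $\log\pi e(1+P_2/(1+P_1))$. Two small remarks: (i) the Markov chain you invoke is not actually needed for the cancellation you describe---the identity $I(\rvy_B;A)-I(\rvy_B;B)=h(\rvy_B|B)-h(\rvy_B|A)$ follows directly from $I(\rvy_B;\cdot)=h(\rvy_B)-h(\rvy_B|\cdot)$, so it is $h(\rvy_B)$ rather than $h(\rvy_B|\rvh_{AB})$ that cancels; (ii) for the second entropy the paper drops $\hh_{BA}$ from the conditioning and then applies the Gaussian max-entropy bound to the unconditional variable $\tilde{\rvh}_{AB}\rvx_A+\rvn_B$, whereas you keep both estimates, compute the conditional entropy exactly as $E[\log(1+\sigma^2|\rvx_A|^2)]$, and then apply MMSE monotonicity plus Jensen---these two routes are equivalent and give the same final bound.
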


The proof of Prop.~\ref{prop:pdisc} is provided in section~\ref{sec:pdisc}.

\subsection{Lower Bound - No Discussion}

\begin{figure}
\includegraphics[scale=0.31]{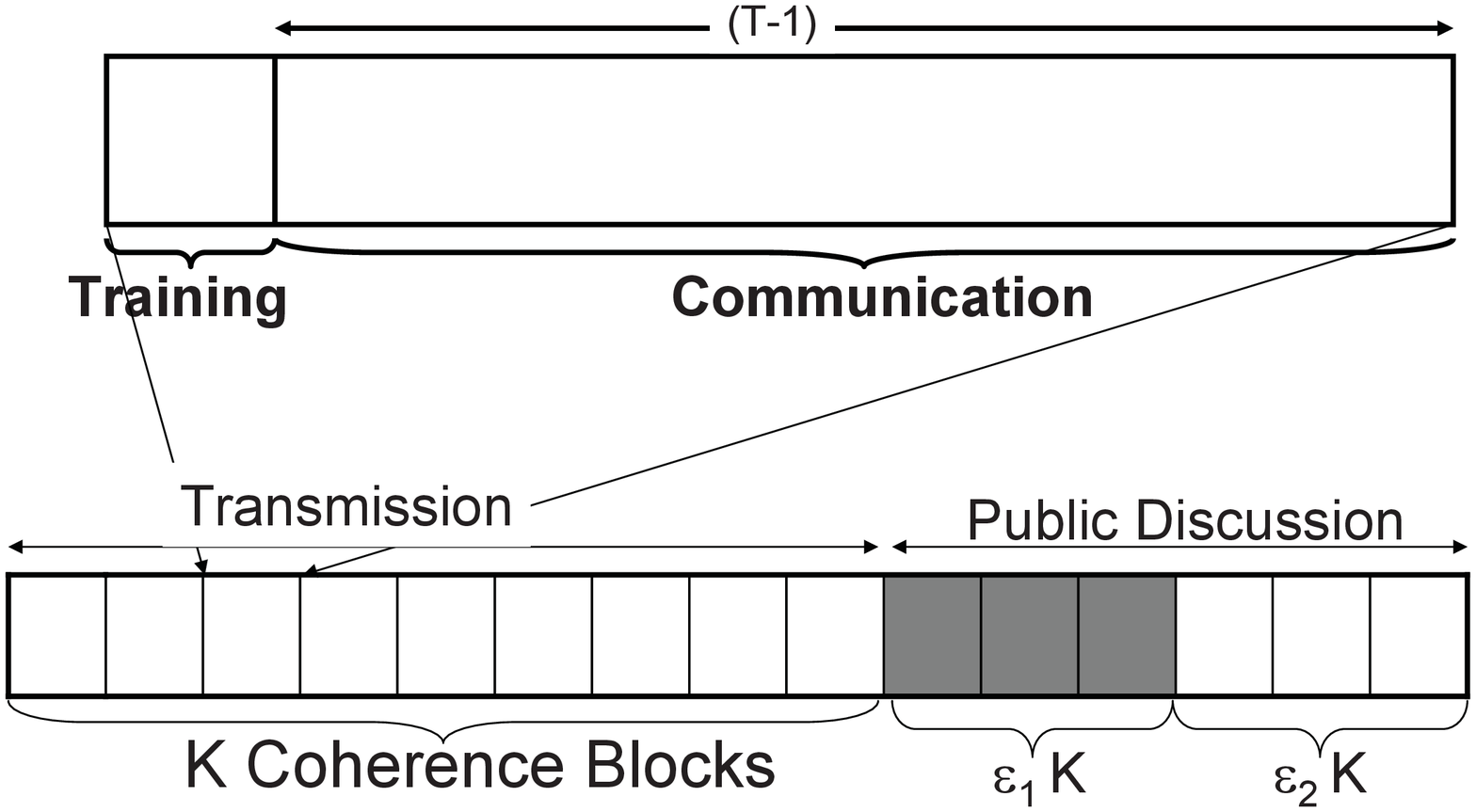}
\caption{Extension of the proposed coding scheme in absence of public discussion. A total of $K$ coherence blocks are used for training and source emulation. 
Thereafter a total of $\eps_1 K$ coherence blocks are used for transmission of the public message associated with the channel sequences and another $\eps_2 K$
coherence blocks are used for the transmission of the public message associated with the source sequences. }
\label{fig:no-disc}
\end{figure}

When a separate discussion channel is not available, we  use the communication channel for sending the public messages during the key-generation step as illustrated in Fig.~\ref{fig:no-disc}.  We also need to quantize the source sequences to satisfy the rate constraint imposed by the channel.

\begin{thm}
\label{thm:lb-nodisc}
An achievable secret-key rate in the absence of public discussion for the two-way reciprocal fading channel is:
\begin{align}
R^- = \frac{1}{1+\eps_1 + \eps_2} \left(\frac{1}{T}R_\mrm{I} + \frac{T-1}{T}R_\mrm{II}\right).
\end{align}
We provide the expressions for $R_\mrm{I}$ and $R_\mrm{II}$ below, both of which depend on $\eps_1$ and $\eps_2$. 
\begin{align}
R_\mrm{I} &= I(\rvu_{AB}; \hh_{BA}) \!\!+\!\! I(\rvu_{BA}; \hh_{AB}) \!\!-\!\! I(\rvu_{AB}; \rvu_{BA}) \label{eq:RI}
\end{align}
where the random variables $\rvu_{AB}$ and $\rvu_{BA}$ satisfy the Markov chain
\begin{align}
\rvu_{AB} \rightarrow \hh_{AB} \rightarrow \hh_{BA} \rightarrow \rvu_{BA} ,
\label{eq:u-markov}
\end{align}
and the rate constraints
\begin{align}
I(\rvu_{AB}; \hh_{AB}| \hh_{BA}) \le (T-1)\eps_1 R_\mrm{NC}(P), \label{eq:Rc-u1}\\
I(\rvu_{BA}; \hh_{BA}| \hh_{AB}) \le (T-1) \eps_1 R_\mrm{NC}(P),\label{eq:Rc-u2}
\end{align}
where $R_\mrm{NC}(P)$ is an achievable rate for the non-coherent block fading channel (see e.g.,~\cite{HassibiHoch:03}).

The rate expression for $R_\mrm{II}$ is expressed as
$R_\mrm{II} = R_\mrm{AB}^- + R_\mrm{BA}^-,$ where
\begin{align}
R_\mrm{AB}^- &= I(\rvv_B; \rvx_A, \rvu) - I(\rvv_B; \rvz_{AE}, \rvg_{AE},\rvh_{AB}),\label{eq:Rab} \\
R_\mrm{BA}^- &= I(\rvv_A; \rvx_B, \rvu) - I(\rvv_A; \rvz_{BE}, \rvg_{BE}, \rvh_{BA}) \label{eq:Rba}
\end{align}where the random variable $\rvu \defeq (\rvu_{AB}, \rvu_{BA})$. 
The random variables $\rvv_A$ and $\rvv_B$ in~\eqref{eq:Rba} and~\eqref{eq:Rab} satisfy the following
Markov Conditions
\begin{align}
\rvv_A \rightarrow   \rvy_A \rightarrow (\rvu,\rvx_B),~\rvv_B \rightarrow  \rvy_B \rightarrow (\rvu,\rvx_A).
\end{align}
as well as the rate constraints:
\begin{align}
I(\rvv_A; \rvy_A | \rvu, \rvx_B) \le \eps_2 R_\mrm{NC}(P), \label{eq:Rc-v1}\\
I(\rvv_B; \rvy_B | \rvu, \rvx_A) \le \eps_2 R_\mrm{NC}(P).\label{eq:Rc-v2}
\end{align}$\hfill\Box$
\end{thm}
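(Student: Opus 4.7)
The plan is to realize the scheme of Theorem~\ref{thm:disc-lb} on top of a wireless-channel-based (rather than external) public-discussion mechanism. As depicted in Fig.~\ref{fig:no-disc}, the total $(1+\eps_1+\eps_2)K$ coherence blocks are partitioned into three pieces: $K$ blocks dedicated to training plus source emulation (exactly as in Theorem~\ref{thm:disc-lb}), $\eps_1 K$ blocks dedicated to exchanging a public description of the channel estimates $(\hh_{AB},\hh_{BA})$, and $\eps_2 K$ blocks dedicated to exchanging a public description of the source-emulation outputs $(\brvby_A,\brvby_B)$. The time-sharing accounts for the pre-factor $1/(1+\eps_1+\eps_2)$, while the per-block contributions $\tfrac{1}{T}R_\mrm{I}$ and $\tfrac{T-1}{T}R_\mrm{II}$ mirror the decomposition of~\eqref{eq:lb-disc}.

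First I would show that during each of the two discussion phases the legitimate terminals can transmit reliably at any rate below $R_\mrm{NC}(P)$ per channel use, using any non-coherent block-fading coding scheme such as the one in~\cite{HassibiHoch:03}. Because any information decodable by the intended receiver is also revealable to the eavesdropper, these blocks faithfully emulate an open public channel whose aggregate rate per ``useful'' coherence block is $(T-1)\eps_1 R_\mrm{NC}(P)$ in the channel-sequence phase and $\eps_2 R_\mrm{NC}(P)$ in the source-sequence phase. The factor $T-1$ in~\eqref{eq:Rc-u1}--\eqref{eq:Rc-u2} enters because each channel-estimate reconciliation block carries $T-1$ usable symbols after the training pilot.

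Next, for the $R_\mrm{I}$ contribution I would have terminal $A$ (resp.\ $B$) quantize its channel estimate $\hh_{BA}$ (resp.\ $\hh_{AB}$) into the auxiliary variable $\rvu_{BA}$ (resp.\ $\rvu_{AB}$) and forward a Wyner--Ziv index at rate $I(\rvu_{BA};\hh_{BA}|\hh_{AB})$ (resp.\ $I(\rvu_{AB};\hh_{AB}|\hh_{BA})$), which is exactly~\eqref{eq:Rc-u1}--\eqref{eq:Rc-u2}; the long Markov chain~\eqref{eq:u-markov} is the standard sufficient condition for such nested reconciliation. Once both terminals share $(\rvu_{AB},\rvu_{BA})$, the Ahlswede--Csisz\'ar source-model key-agreement argument yields the key rate $I(\rvu_{AB};\hh_{BA})+I(\rvu_{BA};\hh_{AB})-I(\rvu_{AB};\rvu_{BA})$, where nothing is subtracted for eavesdropper side information because no signal is transmitted during a training pilot. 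For the $R_\mrm{II}$ contribution I would analogously quantize the channel outputs $\rvy_A\to\rvv_A$ and $\rvy_B\to\rvv_B$ using Wyner--Ziv coding with respective decoder side information $(\rvu,\rvx_B)$ and $(\rvu,\rvx_A)$; the corresponding rates $I(\rvv_A;\rvy_A|\rvu,\rvx_B)$ and $I(\rvv_B;\rvy_B|\rvu,\rvx_A)$ match~\eqref{eq:Rc-v1}--\eqref{eq:Rc-v2}. Granting the eavesdropper the true gains $(\rvh_{AB},\rvh_{BA})$ as in the upper-bound derivation, the Maurer/Ahlswede--Csisz\'ar wiretap-key formula applied to each direction yields the summands $R_\mrm{AB}^-$ and $R_\mrm{BA}^-$.

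The main obstacle is stitching the three key contributions --- the reciprocal channel-estimate key, the forward source-emulation key, and the reverse source-emulation key --- into a single composite key that simultaneously meets the agreement constraint $\Pr(\rvk_A\neq\rvk_B)\le\eps_N$ and the strong-secrecy constraint $\tfrac{1}{N}I(\rvk_A;\rvbz^N,\rvbg^K)\le\eps_N$. I would handle this by generating the three sub-keys from independent randomization seeds, arguing via a chain-rule decomposition of the leakage that the eavesdropper's observations during the training pilot, the source-emulation phase, and the two discussion phases can be bounded term-by-term, and then invoking the standard privacy-amplification step to concatenate the sub-keys. The remaining error-probability and typicality bookkeeping for the quantization, the non-coherent channel coding, and the Wyner--Ziv layers is routine and imported verbatim from the proof of Theorem~\ref{thm:disc-lb}.
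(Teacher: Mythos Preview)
Your high-level architecture matches the paper's: the same time-sharing split, the same Wyner--Ziv quantization of channel estimates into $(\rvu_{AB},\rvu_{BA})$ and of source outputs into $(\rvv_A,\rvv_B)$, and non-coherent coding to carry the bin indices.  Where you diverge is in the secrecy step.  The paper does \emph{not} extract three sub-keys and concatenate; it performs a single unified equivocation calculation, lower-bounding
\[
H\bigl(\rvu_{AB}^K,\rvu_{BA}^K,\brvbv_A^K,\brvbv_B^K \,\big|\, \phi_A,\phi_B,\psi_A,\psi_B,\brvbz^K,\rvbg^K\bigr)
\]
directly and then extracting one key via $M$ i.i.d.\ repetitions of the whole protocol.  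Your chain-rule/three-key alternative can be made to work (the paper's equivocation bound in fact splits along the same three pieces), but you underestimate the work in the equivocation step.  The quantized sequences $\rvu_{AB}^K,\rvu_{BA}^K,\brvbv_A^K,\brvbv_B^K$ are codewords chosen from random codebooks, not i.i.d.\ sequences, so one cannot simply invoke single-letter entropy identities; the paper needs a dedicated technical lemma (its Lemma~\ref{lem:cond-ent}, which lower-bounds $H(\rvy^n\mid\rvu^n(L))$ for a typical-codeword index $L$) together with the Markov Lemma to get from codebook-level entropies back to single-letter expressions like $I(\rvu_{AB};\rvu_{BA})$ and $I(\brvbv_B;\rvh_{AB},\rvg_{AE},\brvbz_A)$.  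Your claim that this bookkeeping is ``imported verbatim from the proof of Theorem~\ref{thm:disc-lb}'' is not accurate: Theorem~\ref{thm:disc-lb} assumes an unlimited-rate discussion channel and involves no quantization codebooks, so none of these codebook-entropy issues arise there.

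Two smaller slips: (i) a signal \emph{is} transmitted during the training pilot; the reason nothing is subtracted for the eavesdropper in $R_\mrm{I}$ is that $(\rvg_{AE},\rvg_{BE})$ is independent of $(\rvh_{AB},\rvh_{BA})$, so the eavesdropper's pilot observation is uninformative about $(\hh_{AB},\hh_{BA})$; (ii) your explanation of the $(T-1)$ factor in~\eqref{eq:Rc-u1}--\eqref{eq:Rc-u2} is off---it reflects the ratio of one channel estimate per block to the per-symbol non-coherent rate over the $\eps_1 K$ transmission blocks, not a pilot reserved inside those blocks.
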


A proof of Theorem~\ref{thm:lb-nodisc} appears in section~\ref{sec:nodisc}.

We further evaluate the rate expression in Theorem~\ref{thm:lb-nodisc} for a Gaussian test channel:
\begin{align}
\rvu_{AB} = \hh_{AB} + \rvq_{AB}, \quad \rvu_{BA} = \hh_{BA} + \rvq_{BA}\label{eq:u-test}
\end{align}
where $\rvq_{AB}, \rvq_{BA} \sim \CN(0, Q_1)$ are Gaussian random variables independent of all other variables.
Similarly we let
\begin{align}
\rvv_{A} = \rvy_A + \rvw_{A}, \quad  \rvv_B = \rvy_B+ \rvw_{B}\label{eq:v-test}
\end{align}
where $\rvw_A, \rvw_B \sim \CN(0,Q_2)$  are Gaussian random variables independent of all other variables. 

\begin{figure*}[htbp]
  \begin{minipage}[t]{0.49\linewidth}
    \centering
    \includegraphics[trim = 5mm 60mm 5mm 60mm, clip, width=\linewidth]{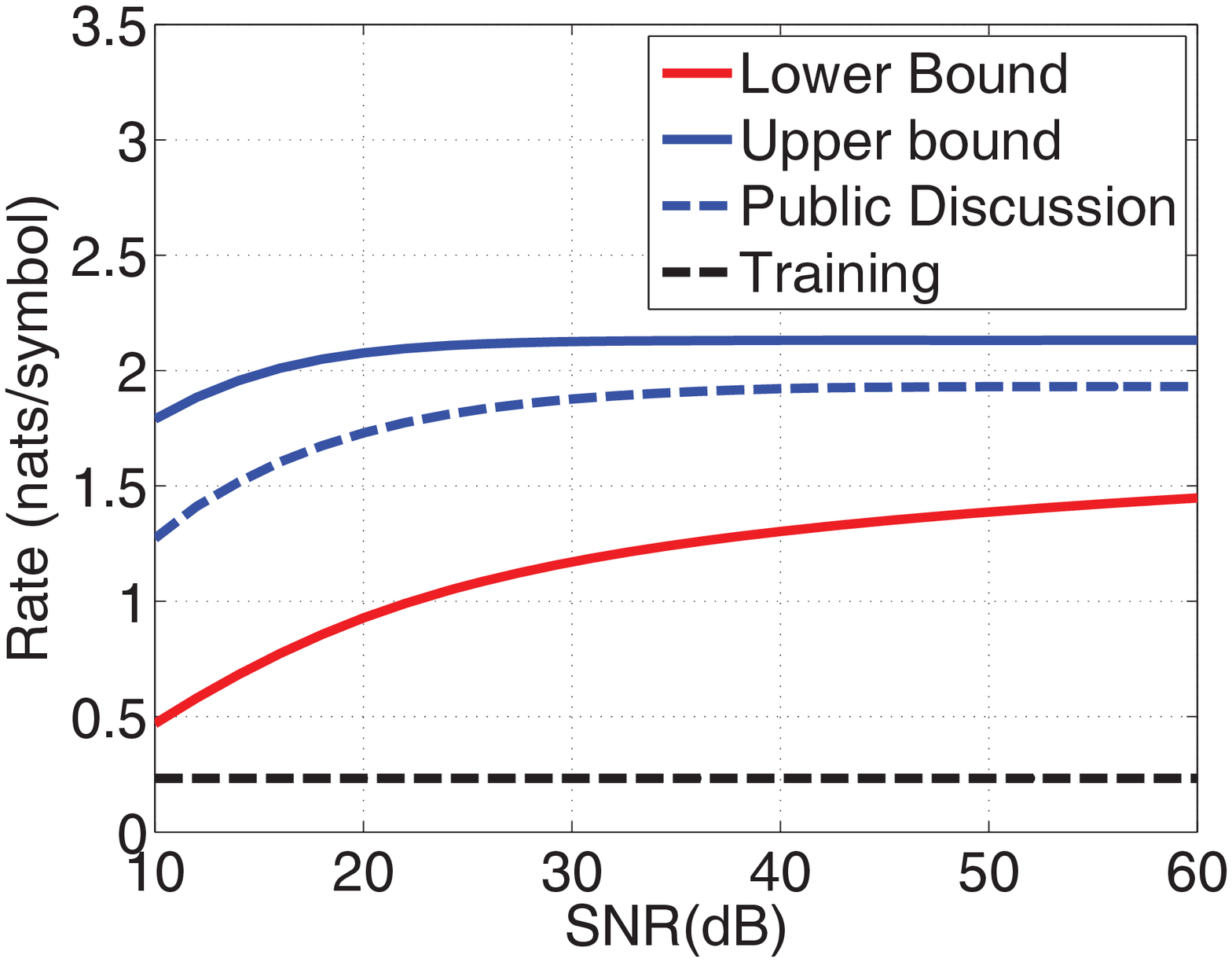}
    
    %\vspace{-6em}
    
    \caption{Bounds on the SK capacity as a function of SNR for a coherence period of $T=10$. }
    \label{fig:snr}
  \end{minipage}
  \hspace{0.5cm}
  \begin{minipage}[t]{0.49\linewidth}
    \centering
    \includegraphics[trim = 5mm 60mm 5mm 60mm, clip, width=\linewidth]{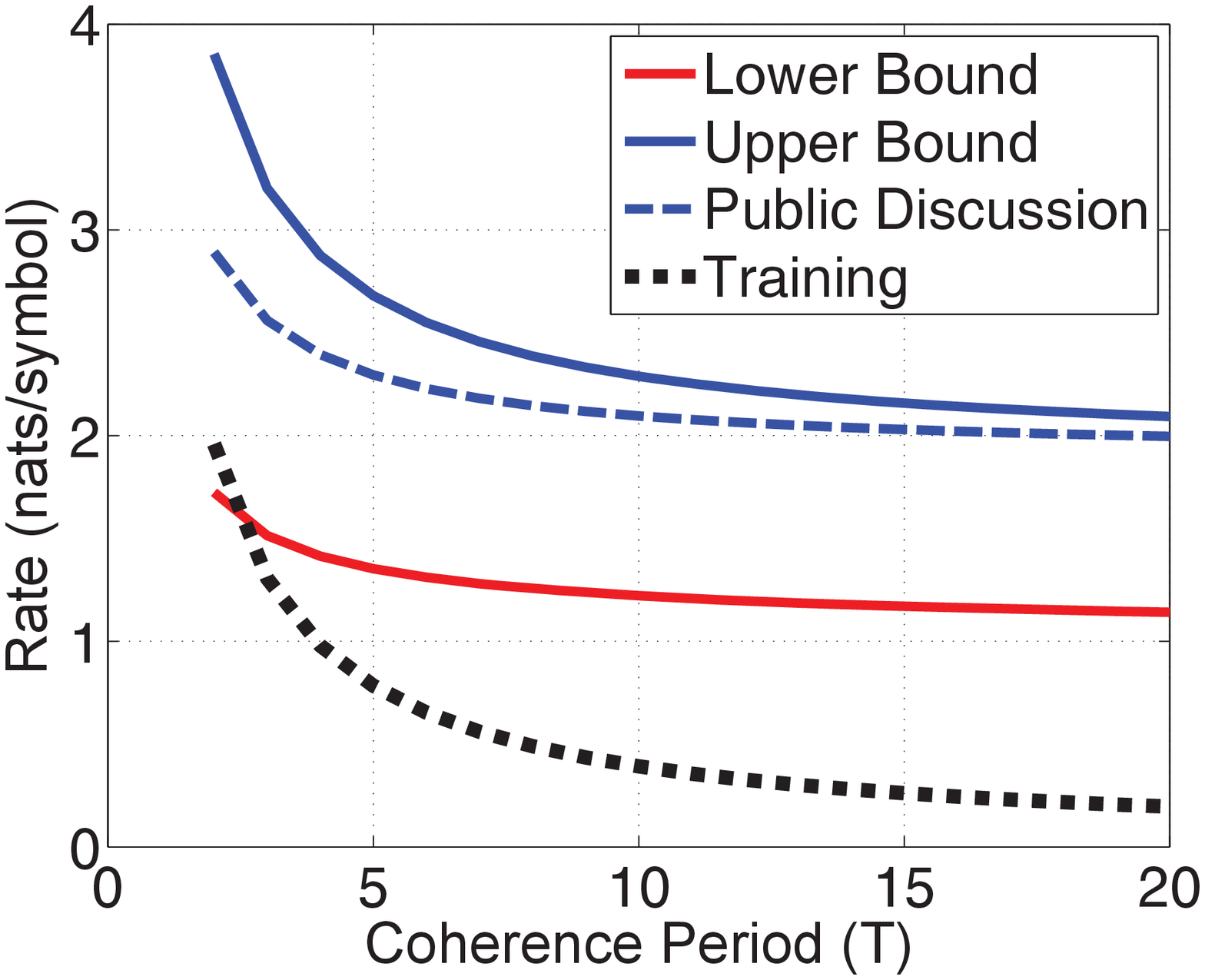}
        
    \caption{Bounds on the SK capacity as a function of coherence period for SNR = 30dB.}
    \label{fig:T}
  \end{minipage}
\end{figure*}

%\vspace{-1em}

\begin{prop}
\label{thm:lb}
An achievable secret-key rate in the absence of public discussion for the two-way reciprocal fading channel is given by:
\begin{align}
\!\!R^-= \frac{1}{\!\!1\!\!+\!\!\eps_1 \!\!+\!\! \eps_2}\bigg\{ \frac{1}{T}R_\mrm{I}(\eps_1, P_1) + \frac{T-1}{T}R_\mrm{II}(\eps_2, P_2)\bigg\} \label{eq:R-}
\end{align}
where $P_1$ and $P_2$ are non-negative and satisfy~\eqref{eq:p-cons} and $\eps_1$ and $\eps_2$ are non-negative constants
that will be specified in the sequel. The rate  expressions
$R_\mrm{I}$ and $R_\mrm{II}$  are as follows.
 \begin{multline}
 R_\mrm{I}(\eps_1, P_1) =\\
\left\{\!\!-2\log\left(1-\frac{\al^2 \rho^2}{1+\frac{Q_1}{\al}}\right)\!\! +\!\! \log\left(1-\frac{\al^2 \rho^2}{\left(1+\frac{Q_1}{\al}\right)^2}\!\!\right)  \right\}\label{eq:RI-exp}
 \end{multline}
 where $\al = \frac{P_1}{1+P_1}$ and $Q_1$ satisfies
\begin{align}
\log\left(1 + \frac{\al(1-\al^2\rho^2)}{Q_1}\right) \le \eps_1 TR_\mrm{NC}(P). \label{eq:Q1-cons}
\end{align}
The expression for $R_\mrm{II}$ satisfies 
\begin{align}R_\mrm{II}(\eps_2, P_2) \!\!&= \!\!\bigg\{R_{AB}^-(\eps_2, P_2) \!\!+\!\! R_{BA}^-(\eps_2, P_2)\bigg\},\\
\!\!R_{AB}^-(\eps_2, P_2)\!\! &=\!\! E\left[\!\log\left(\!\!1+\!\! \frac{P_2 |\rvh_{AB}|^2}{(1 + Q_2 )(1+P_2 |\rvg_{AE}|^2\!)}\right)\!\right] \notag \\ &\qquad - \log\left(\frac{\sigma_{AB}^2 P_2}{1 + Q_2 } + 1\right)\label{eq:R-AB},\\
\!\!R_{BA}^-\!(\eps_2, P_2)\! &=\!\! E\left[\!\log\!\!\left(\!\!1\!\!+\!\! \frac{P_2 |\rvh_{BA}|^2}{(1 + Q_2 )(1+P_2 |\rvg_{BE}|^2)}\!\right)\!\right] \notag \\ &\qquad - \log\left(\frac{\sigma_{BA}^2 P_2}{1 + Q_2 } + 1\right),\label{eq:R-BA}
\end{align}
where \begin{align}
\sigma_{AB}^2 = \sigma_{BA}^2  = 1- \frac{\al^2}{Q_1 + \al}\label{eq:sigma}
\end{align} 
 and $Q_2$ satisfies
\begin{align}
\log\left(1 + \frac{\sigma_{AB}^2 P_2 + 1}{Q_2} \right)\le \eps_2 R_\mrm{NC}(P).\label{eq:Rb-cons}
\end{align}$\hfill\Box$
\end{prop}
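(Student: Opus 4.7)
The plan is to specialize the general rate formula of Theorem~\ref{thm:lb-nodisc} to the Gaussian auxiliaries~\eqref{eq:u-test} and~\eqref{eq:v-test} and to evaluate every mutual information by direct LMMSE calculation. First I would record the joint Gaussian statistics: combining~\eqref{eq:hh-AB}--\eqref{eq:hh-BA} with the independence of the additive noises $(\rve_{AB},\rve_{BA},\rvq_{AB},\rvq_{BA})$ shows that $(\hh_{AB},\hh_{BA},\rvu_{AB},\rvu_{BA})$ is jointly Gaussian with $\var(\hh_{AB})=\var(\hh_{BA})=\al$, $\cov(\hh_{AB},\hh_{BA})=\al^{2}\rho$, $\var(\rvu_{AB})=\var(\rvu_{BA})=\al+Q_1$ and $\cov(\rvu_{AB},\rvu_{BA})=\al^{2}\rho$. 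These four moments drive the entire derivation.

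For $R_{\mrm{I}}$, the scalar Gaussian identity $I(X;Y)=\log(\var(X)/\var(X\mid Y))$ directly yields $I(\rvu_{AB};\hh_{BA})=I(\rvu_{BA};\hh_{AB})=-\log(1-\al^{2}\rho^{2}/(1+Q_1/\al))$ and $I(\rvu_{AB};\rvu_{BA})=-\log(1-\al^{2}\rho^{2}/(1+Q_1/\al)^{2})$; assembling them as in~\eqref{eq:RI} gives~\eqref{eq:RI-exp}. The constraint~\eqref{eq:Q1-cons} comes from~\eqref{eq:Rc-u1} by writing, via the Markov chain~\eqref{eq:u-markov}, $I(\rvu_{AB};\hh_{AB}\mid\hh_{BA})=I(\rvu_{AB};\hh_{AB})-I(\rvu_{AB};\hh_{BA})$, which evaluates to $\log(1+\al(1-\al^{2}\rho^{2})/Q_1)$.

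For $R_{\mrm{II}}$ I would use the identity $R_{AB}^{-}=h(\rvv_B\mid\rvz_{AE},\rvg_{AE},\rvh_{AB})-h(\rvv_B\mid\rvx_A,\rvu)$. The first entropy is computed by noting that, given $(\rvh_{AB},\rvg_{AE},\rvz_{AE})$, the posterior distribution of $\rvx_A$ is Gaussian with variance $P_2/(1+P_2|\rvg_{AE}|^{2})$, so that $\rvv_B=\rvh_{AB}\rvx_A+\rvn_B+\rvw_B$ is conditionally Gaussian with variance $1+Q_2+|\rvh_{AB}|^{2}P_2/(1+P_2|\rvg_{AE}|^{2})$; averaging over $(\rvh_{AB},\rvg_{AE})$ produces the first logarithm in~\eqref{eq:R-AB}. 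For the second entropy I would decompose $\rvh_{AB}=\hat{\mu}(\rvu_{AB})+\tilde{\rvh}_{AB}$ with $\tilde{\rvh}_{AB}\sim\CN(0,\sigma_{AB}^{2})$ as in~\eqref{eq:sigma}; then $\rvv_B$ given $(\rvu,\rvx_A=x)$ is Gaussian with variance $|x|^{2}\sigma_{AB}^{2}+1+Q_2$, and a single application of Jensen's inequality bounds $h(\rvv_B\mid\rvx_A,\rvu)$ above by $\log\bigl(\pi e(\sigma_{AB}^{2}P_2+1+Q_2)\bigr)$, yielding the negative non-coherent penalty in~\eqref{eq:R-AB}. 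A mirror argument produces~\eqref{eq:R-BA}, and the same Jensen bound applied to $I(\rvv_B;\rvy_B\mid\rvu,\rvx_A)=h(\rvv_B\mid\rvu,\rvx_A)-\log(\pi e Q_2)$ supplies~\eqref{eq:Rb-cons}.

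The main obstacle is the careful identification of $\sigma_{AB}^{2}$ in~\eqref{eq:sigma} as the MMSE variance of $\rvh_{AB}$ conditional on $\rvu_{AB}$ alone, rather than on the full vector $\rvu=(\rvu_{AB},\rvu_{BA})$. The true posterior variance of $\rvh_{AB}$ given $\rvu$ is strictly smaller whenever $\rho\neq 0$, but using the looser $\sigma_{AB}^{2}$ is harmless because Jensen's inequality is applied to \emph{upper bound} the negative entropy term $h(\rvv_B\mid\rvx_A,\rvu)$: any resulting slack only weakens the achievable lower bound on $R_{AB}^{-}$ while preserving validity. Beyond this subtle point the argument is essentially bookkeeping of Gaussian log-variance identities.
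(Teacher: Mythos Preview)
Your proposal is correct and follows essentially the same route as the paper: specialize Theorem~\ref{thm:lb-nodisc} to the Gaussian test channels~\eqref{eq:u-test}--\eqref{eq:v-test}, compute the channel-sequence mutual informations via scalar Gaussian log-variance identities, and handle the source-sequence terms by writing $R_{AB}^-=h(\rvv_B\mid\rvz_{AE},\rvg_{AE},\rvh_{AB})-h(\rvv_B\mid\rvx_A,\rvu)$ and upper-bounding the second entropy with the Gaussian max-entropy/Jensen step. The paper likewise bounds the true MMSE $\var(\rvh_{AB}\mid\rvu_{AB},\rvu_{BA})$ by the looser $\var(\rvh_{AB}\mid\rvu_{AB})=1-\al^2/(Q_1+\al)$ to obtain~\eqref{eq:sigma}, exactly the subtlety you flag; your justification that this only weakens the lower bound is the right one.
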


A proof of Prop.~\ref{thm:lb} appears in section~\ref{subsec:nodisc-G}.

\begin{remark}
\label{rem:reduce}
The rate achieved in Prop.~\ref{thm:lb} reduces to the rate achieved using public discussion in Prop.~\ref{prop:pdisc} when we take ${Q_1=Q_2=0}$.
In particular when $Q_1=0$ note that the expression for $R_\mrm{I}$ in~\eqref{eq:RI-exp} immediately reduces to~\eqref{eq:pdisc}. Furthermore~\eqref{eq:sigma} reduces to
\begin{align}
\sigma_{AB}^2 = \sigma_{BA}^2  = \frac{1}{1+P_1}.\label{eq:sigma-2}
\end{align}
Substituting $Q_2=0,$ and~\eqref{eq:sigma-2} in~\eqref{eq:R-AB} and~\eqref{eq:R-BA} we obtain~\eqref{eq:Rp_AB} and~\eqref{eq:Rp_BA}
respectively. Thus the rate expressions in Prop.~\ref{thm:lb} are consistent with the expressions in Prop.~\ref{prop:pdisc} when  quantization (c.f.~\eqref{eq:u-test} and~\eqref{eq:v-test}) is not introduced.
 \end{remark}

We also observe that the upper and lower bounds are close in the high signal-to-noise-ratio (SNR) regime.

\begin{corol}
\label{corol:highSNR}
In the high SNR regime the upper and lower bounds satisfy the following relation:
\begin{align}
\lim_{P\rightarrow \infty} R^+(P) - R^-_\mrm{PD}(P) = \frac{1}{T}\g \label{eq:c1}\\
\lim_{P\rightarrow \infty} R^+(P) - R^-(P) = \frac{1}{T}\g\label{eq:c2}
\end{align}
where $R^+,$ $R^-_\mrm{PD}$ and $R^-$ are given by~\eqref{eq:R+},~\eqref{eq:pdisc} and~\eqref{eq:R-} respectively
and 
\begin{align}\g \defeq \!\!E\left[\log\left(1+ \frac{|\rvh_{AB}|^2}{|\rvg_{AE}|^2}\right)\right] \!\!+\!\! E\left[\log\left(1+ \frac{|\rvh_{BA}|^2}{|\rvg_{BE}|^2}\right)\right]\label{eq:g-def}\end{align}
is a constant that only depends on the distributions $p(\rvh_{AB})p(\rvg_{AE})$ and $p(\rvh_{BA})p(\rvg_{BE})$.
\end{corol}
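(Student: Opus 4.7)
The plan is to prove \eqref{eq:c1} first by evaluating the asymptotics of $R^+$ and $R^-_{PD}$ term-by-term, and then deduce \eqref{eq:c2} by showing $R^-\to R^-_{PD}$ in the limit via judicious choices of the quantization parameters $(\eps_1,\eps_2,Q_1,Q_2)$. Throughout I will exploit the pointwise identity, valid for any $P(h)\ge 0$ and $g\ne 0$,
\begin{equation}
\log\left(1+\frac{|h|^2 P(h)}{1+|g|^2 P(h)}\right)\le \log\left(1+\frac{|h|^2}{|g|^2}\right),\label{eq:point}
\end{equation}
which follows from a direct algebraic manipulation and which supplies a dominating function for applying dominated convergence. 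The right-hand side of \eqref{eq:point} is integrable against $p(\rvh_{AB})p(\rvg_{AE})$ for the Gaussian fading model, since $-\log|\rvg_{AE}|^2$ has finite mean, so the limits below are legitimate.

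For the upper bound, the term $\frac{1}{T}I(\rvh_{AB};\rvh_{BA})=-\frac{1}{T}\log(1-\rho^2)$ does not depend on $P$. Evaluating the optimization in $\cP_{AB}$ at the constant allocation $P(h)\equiv P$ gives $E[\log(1+|\rvh_{AB}|^2 P/(1+|\rvg_{AE}|^2 P))]$, which by dominated convergence (using \eqref{eq:point}) tends to $E[\log(1+|\rvh_{AB}|^2/|\rvg_{AE}|^2)]$; the same inequality \eqref{eq:point} shows this is also an asymptotic upper bound on the maximum, so the supremum converges to this limit. The third term in $R^+$ is handled symmetrically. Hence $R^+\to \tfrac{1}{T}I(\rvh_{AB};\rvh_{BA})+\gamma$, with $\gamma$ as in \eqref{eq:g-def}.

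For the lower bound with public discussion, I would pick a power split that makes the training power dominate yet still drives the data power to infinity, e.g.\ $P_2=\sqrt{P}$ and $P_1=TP-(T-1)P_2\sim TP$, which satisfies \eqref{eq:p-cons}. Then $\al\to 1$, so $R^-_{P,T}\to -\log(1-\rho^2)=I(\rvh_{AB};\rvh_{BA})$. The training penalty $\log(1+P_2/(1+P_1))\to 0$ because $P_2/P_1\to 0$, while the remaining expectation in $R^-_{P,AB}$ converges to $E[\log(1+|\rvh_{AB}|^2/|\rvg_{AE}|^2)]$ by dominated convergence, and similarly for $R^-_{P,BA}$. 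Combining,
\begin{equation*}
R^-_{PD}\to \frac{1}{T}I(\rvh_{AB};\rvh_{BA})+\frac{T-1}{T}\gamma,
\end{equation*}
so $R^+-R^-_{PD}\to(1-\tfrac{T-1}{T})\gamma=\tfrac{1}{T}\gamma$, establishing \eqref{eq:c1}.

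For \eqref{eq:c2}, I would invoke Remark~\ref{rem:reduce}, which states that $R^-$ reduces to $R^-_{PD}$ upon setting $Q_1=Q_2=0$. The only obstruction is that the constraints \eqref{eq:Q1-cons} and \eqref{eq:Rb-cons} forbid $Q_i=0$ unless $\eps_i R_\mrm{NC}(P)$ is unbounded. Since $R_\mrm{NC}(P)$ for the non-coherent block-fading channel grows like $(1-1/T)\log P$ as $P\to\infty$, I can choose $\eps_1,\eps_2\to 0$ slowly enough, say $\eps_1=\eps_2=1/\sqrt{\log P}$, so that $\eps_i R_\mrm{NC}(P)\to\infty$. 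Then \eqref{eq:Q1-cons} and \eqref{eq:Rb-cons} allow choices $Q_1,Q_2\to 0$; the prefactor $1/(1+\eps_1+\eps_2)\to 1$; and $R_\mrm{I}(\eps_1,P_1)\to R^-_{P,T}$ and $R_\mrm{II}(\eps_2,P_2)\to R^-_{P,AB}+R^-_{P,BA}$ by continuity (which is essentially the content of Remark~\ref{rem:reduce}). Hence $R^-\to R^-_{PD}$, and combining with the earlier limit yields $R^+-R^-\to \tfrac{1}{T}\gamma$.

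The main obstacle is the simultaneous management of three competing limits: $P_2$ must diverge for the data-phase mutual information to saturate; $P_2/P_1$ must vanish to kill the training penalty; and in \eqref{eq:c2} the $(\eps_i,Q_i)$ pair must satisfy the rate constraints while both tend to zero. All three are handled by the explicit choices above, but verifying uniform integrability for the dominated-convergence arguments --- particularly near $\rvg_{AE}=0$ and $\rvg_{BE}=0$ where the integrand in \eqref{eq:g-def} has a logarithmic singularity --- will require a short separate check using the Gaussian tail bound.
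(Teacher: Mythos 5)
Your proposal is correct and follows essentially the same route as the paper: bound the $P$-dependent terms of $R^+$ via the monotonicity of $x\mapsto \frac{ax}{1+bx}$ plus dominated convergence, drive $P_1,P_2\to\infty$ with $P_2/P_1\to 0$ to get $R^-_{PD}\to -\frac{1}{T}\log(1-\rho^2)+\frac{T-1}{T}\g$, and for $R^-$ pick $\eps_1,\eps_2\to 0$ with $\eps_i R_\mrm{NC}(P)\to\infty$ so that $Q_1,Q_2\to 0$ and the no-discussion rate collapses to the public-discussion one (Remark~\ref{rem:reduce}). The only cosmetic differences are your explicit choices $P_2=\sqrt{P}$ and $\eps_i=1/\sqrt{\log P}$ in place of the paper's $P_2=\frac{\sqrt{P}}{T-1}$, $P_1=P-\sqrt{P}$ and unspecified $\eps_i(P)$, which do not change the argument.
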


A proof of Corollary~\ref{corol:highSNR} appears in section~\ref{sec:cap-highSNR}.
%We omit the proof of Corollary~\ref{corol:highSNR} due to space constraints.

%\vspace{-6em}

\subsection*{Numerical Comparison:}

Fig.~\ref{fig:snr} shows the bounds on secret-key capacity as a function of SNR when the coherence period $T=10$.
Fig.~\ref{fig:T} shows the bounds as a function of the coherence period $T$ when SNR = $35$ dB.
In Fig.~\ref{fig:snr} we fix the correlation coefficient in uplink and downlink gains to $\rho=0.95$ while in Fig.~\ref{fig:T}
it is fixed to $\rho = 0.99$. 

We make several observations in these plots. The lowermost plot in both figures corresponds to the best rate any training
based scheme can achieve i.e.,
\begin{align}
R_\mrm{training} = -\frac{1}{T}\log(1-\rho^2) \label{eq:R+train}
\end{align}
In computing~\eqref{eq:R+train} we assume that the legitimate receivers have a genie-aided side
information of the respective  channel gains and furthermore no overhead arising from the transmission
of public messages is considered. In Fig.~\ref{fig:snr} this rate is SNR independent and in Fig.~\ref{fig:T}
it decreases to zero as $\frac{1}{T}$.

The upper-most plot in Fig.~\ref{fig:snr} and Fig.~\ref{fig:T} is the upper bound on the secret-key agreement
capacity for any scheme. We note that for small values of $T$ the contribution from channel-reciprocity term in~\eqref{eq:R+}
is dominant. As $T$ increases the contribution of this term diminishes and the contribution from the communication
channel increases (c.f.~\eqref{eq:R+}). 

The remaining two plots are the lower bounds with and without public discussion. The lower bound with public discussion 
does not involve the overhead of transmitting a public message. For sufficiently high SNR, it approaches the upper bound.
The lower bound without public discussion does not come close to the upper in the SNR range of interest. Nevertheless
it achieves a much higher rate than the training based schemes.

\section{Proof of Theorem~\ref{thm:ub}}

\label{sec:ub}

In this section we provide a proof of Theorem~\ref{thm:ub}.  We assume that the communication 
is over a total of $K$ coherence blocks and let $N = T\cdot K$ denote the total number of channel uses.
Thus the channel state-sequences are of length $K$ whereas the channel input and output sequences are of length
$N$. We use the notation $\urvh_{AB}^K$ to denote the channel state sequence of length $K$ and the notation $\rvx_{A}^N$
to denote a channel input sequence of length $N$.
For the eavesdropper, we use the the notation $\rvbz^N \defeq (\rvz_{AE}^N,\rvz_{BE}^N)$
and  $\burvg^K \defeq ({\rvg}_{AE}^K,{\rvg}_{BE}^K)$.

From the Fano's inequality and the secrecy constraint we have that
\begin{align}
NR&\le I(\rvk_A; \rvk_B) - I( \rvk_A; \rvbz^N, \burvg^K)  + 2n\eps_n\\
&\le I(\rvk_A; \rvk_B| \rvbz^N, \burvg^K) + 2n\eps_n
\end{align}
where $\eps_n \rightarrow 0$ as $n\rightarrow \infty$. In the following steps the term $\eps_n$ will be suppressed. 
Since $\rvk_A = f_A(\rvm_A, \rvy_A^N)$ and $\rvk_B = f_B(\rvm_B, \rvy_B^N),$ 
using the data-processing inequality and the chain rule of mutual information, we have:
{{\begin{align}
&NR  \le 
I(\rvm_A,\urvh_{BA}^K, \rvy_A^N; \rvm_B, \urvh_{AB}^{K}, \rvy_B^N| \rvbz^N, \burvg^K)\notag\\
&=I(\rvm_A, \urvh_{BA}^K,\rvy_A^N; \rvm_B, \urvh_{AB}^K, \rvy_B^{N-1}| \rvbz^N, \burvg^K) +\notag\\ 
&I(\rvm_A, \urvh_{BA}^K, \rvy_A^N; \rvy_B(N)| \rvbz^N, \burvg^K,\rvm_B, \urvh_{AB}^K,\rvy_B^{N-1})\notag\\
&=I(\rvm_A, \urvh_{BA}^K,\rvy_A^{N-1}; \rvm_B, \urvh_{AB}^K, \rvy_B^{N-1}| \rvbz^N, \burvg^K)+\notag\\
&I(\rvy_A(N); \rvm_B, \urvh_{AB}^K, \rvy_B^{N-1}| \rvbz^N, \burvg^K, \rvm_A, \urvh_{BA}^K, \rvy_A^{N-1})+\notag\\
&I(\rvm_A, \urvh_{BA}^K, \rvy_A^{N-1}; \rvy_B(N)| \rvbz^N, \burvg^K,\rvm_B, \urvh_{AB}^K,\rvy_B^{N-1}) +\notag\\
&{{I(\rvy_A(N); \rvy_B(N)| \rvbz^N\!\!, \burvg^K,\!\!\rvm_B, \urvh_{AB}^K,\rvy_B^{N-1},\!\! \rvm_A, \urvh_{BA}^K, \!\!\rvy_A^{N-1})}}.
\label{eq:t-split}
\end{align}}}

\begin{figure*}{
\begin{align}\hline
&I(\rvm_A, \urvh_{BA}^K,\rvy_A^{N-1}; \rvm_B, \urvh_{AB}^K, \rvy_B^{N-1}| \rvbz^N, \rvbg^K)\label{eq:t1-start}\\
&\le I(\rvm_A, \urvh_{BA}^K,\rvy_A^{N-1}, \rvz_{AE}(N); \rvm_B, \urvh_{AB}^K, \rvy_B^{N-1}, \rvz_{BE}(N)| \rvbz^{N-1}, \burvg^K)\\
&= I(\rvm_A, \urvh_{BA}^K,\rvy_A^{N-1}; \rvm_B, \urvh_{AB}^K, \rvy_B^{N-1}, \rvz_{BE}(N)| \rvbz^{N-1}, \burvg^K)  \notag\\ &\qquad+ I(\rvz_{AE}(N); \rvm_B, \urvh_{AB}^K, \rvy_B^{N-1}, \rvz_{BE}(N)| \rvbz^{N-1}, \burvg^K,\rvm_A, \urvh_{BA}^K,\rvy_A^{N-1} )\\
&= I(\rvm_A, \urvh_{BA}^K,\rvy_A^{N-1}; \rvm_B, \urvh_{AB}^K, \rvy_B^{N-1}, \rvz_{BE}(N)| \rvbz^{N-1}, \burvg^K) \label{eq:markov-1}\\
&=I(\rvm_A, \urvh_{BA}^K,\rvy_A^{N-1}; \rvm_B, \urvh_{AB}^K, \rvy_B^{N-1}| \rvbz^{N-1}, \burvg^K) +I(\rvm_A, \urvh_{BA}^K,\rvy_A^{N-1}; \rvz_{BE}(N)| \rvbz^{N-1}, \burvg^K, \rvm_B, \urvh_{AB}^K, \rvy_B^{N-1})\\
&=I(\rvm_A, \urvh_{BA}^K,\rvy_A^{N-1}; \rvm_B, \urvh_{AB}^K, \rvy_B^{N-1}| \rvbz^{N-1}, \burvg^K)\label{eq:markov-2}\\\hline\notag
\end{align}}
\end{figure*}

We bound each of the four terms in~\eqref{eq:t-split}.  We show that the following condition holds
\begin{align}
&I(\rvm_A, \urvh_{BA}^K,\rvy_A^{N-1}; \rvm_B, \urvh_{AB}^K, \rvy_B^{N-1}| \rvbz^N, \burvg^K)\notag\\
&\le I(\rvm_A, \urvh_{BA}^K,\rvy_A^{N-1}; \rvm_B, \urvh_{AB}^K, \rvy_B^{N-1}| \rvbz^{N-1}, \burvg^K)\label{eq:ind-1}
\end{align}
in the steps between~\eqref{eq:t1-start} and~\eqref{eq:markov-2} on the top of  next page.
Note that~\eqref{eq:markov-1} follows from the fact that $\rvx_A(N)$ is a function of $(\rvm_A, \rvy_{A}^{N-1})$ and
furthermore  $\rvz_{AE}(N)$ is independent of all other random variables given $(\rvx_A(N), \rvg_{AE}(N))$ (c.f.~\eqref{eq:model-4}).
Eq.~\eqref{eq:markov-2} similarly follows because $\rvx_B(N)$ is a function of $(\rvm_B, \rvy_B^{N-1})$ and furthermore $\rvz_{BE}(N)$
is independent of all other random variables given $(\rvx_B(N), \rvg_{BE}(N))$. 

Furthermore, since $(\rvy_A(N),\rvz_{BE}(N))$ are independent of all other
random variables given $(\rvx_B(N), \rvh_{BA}(N), \rvg_{BE}(N))$ we can show that
\begin{align}
&I(\rvy_A(N); \rvm_B, \urvh_{AB}^K, \rvy_B^{N-1}| \rvbz^N, \burvg^K, \rvm_A, \urvh_{BA}^K, \rvy_A^{N-1})\notag\\
&\le I(\rvy_A(N); \rvx_B(N)| \rvz_{BE}(N), \rvg_{BE}(N),\rvh_{BA}(N)).\label{eq:ind-2}
\end{align}
and likewise
\begin{align}
&I(\rvm_A, \urvh_{BA}^K, \rvy_A^{N-1}; \rvy_B(N)| \rvbz^N, \burvg^K,\rvm_B, \urvh_{AB}^K,\rvy_B^{N-1})\notag \\
&\le I(\rvx_A(N); \rvy_B(N)| \rvz_{AE}(N), \rvh_{AB}(N), \rvg_{AE}(N))\label{eq:ind-3}.
\end{align}
Finally using the fact that $\rvx_A(N)$ and $\rvx_B(N)$ are functions of $(\rvm_A, \rvy_{BA}^{N-1})$ and $(\rvm_B, \rvy_{AB}^{N-1})$
respectively and the fact that $\rvy_A(N)$ is independent of all other random variables given $(\rvx_A(N),\rvh_{AB}(N))$
and similarly $\rvy_B(N)$ is independent of all other random variables given $(\rvx_B(N),\rvh_{BA}(N))$
 we can show that 
 \begin{align}
& I(\rvy_A(N); \rvy_B(N)| \rvbz^N\!\!, \burvg^K,\!\!\rvm_B, \urvh_{AB}^K,\rvy_B^{N-1},\!\! \rvm_A, \urvh_{BA}^K, \!\!\rvy_A^{N-1})\notag\\
 &\qquad=0.
 \end{align}Thus substituting~\eqref{eq:ind-1},~\eqref{eq:ind-2}
and~\eqref{eq:ind-3} into~\eqref{eq:t-split} we have that
\begin{align}
NR &\le I(\rvm_A, \urvh_{BA}^K,\rvy_A^{N-1}; \rvm_B, \urvh_{AB}^K, \rvy_B^{N-1}| \rvbz^{N-1}, \burvg^K)\notag\\
&+I(\rvy_A(N); \rvx_B(N)| \rvz_{BE}(N), \rvg_{BE}(N),\rvh_{BA}(N))\notag\\
&+I(\rvx_A(N); \rvy_B(N)| \rvz_{AE}(N), \rvh_{AB}(N), \rvg_{AE}(N))\label{eq:t-split-2}
\end{align}
Recursively applying the same steps we have that
\begin{align}
NR &\le I(\rvm_A, \urvh_{BA}^K; \urvh_{AB}^K, \rvm_B | \burvg^K) + \notag\\
&\sum_{i=1}^N I(\rvx_B(i); \rvy_A(i) | \rvz_{BE}(i), \rvg_{BE}(i), \rvh_{BA}(i))+ \notag\\
&\sum_{i=1}^N I(\rvx_A(i); \rvy_B(i) | \rvz_{AE}(i), \rvg_{AE}(i), \rvh_{AB}(i))\\
&= K I(\rvh_{BA}; \rvh_{AB}) +\notag\\
&\sum_{i=1}^N I(\rvx_B(i); \rvy_A(i) | \rvz_{BE}(i), \rvg_{BE}(i), \rvh_{BA}(i))+ \notag\\
&\sum_{i=1}^N I(\rvx_A(i); \rvy_B(i) | \rvz_{AE}(i), \rvg_{AE}(i), \rvh_{AB}(i))\label{eq:final-sum}
\end{align}
where in~\eqref{eq:final-sum} we use the fact that $(\rvm_A, \rvm_B)$ are mutually independent and independent
of the channel gains and furthermore $(\urvh_{AB}^K, \urvh_{BA}^K)$ are independent of $\burvg^K$.
Finally to establish the upper bound~\eqref{eq:R+} suppose that we assign a power $P_i(h_{AB})$
when the channel gain at time $i$ equals $h_{AB}$. Using the fact that a Gaussian input distribution
maximizes the conditional mutual information terms in~\eqref{eq:final-sum} (see e.g.,~\cite{khistiWornell:MISOME}) we have:
\begin{align}
& I(\rvx_A(i); \rvy_B(i) | \rvz_{AE}(i), \rvg_{AE}(i), \rvh_{AB}(i))  \notag\\
&\le E\left[\log\left(1 + \frac{P_i(\rvh_{AB})|\rvh_{AB}|^2}{1 + P_i(\rvh_{AB})|\rvg_{AE}|^2}\right)\right].   \label{eq:gauss-opt}   
\end{align}
Thus we have that
\begin{align}
&\sum_{i=1}^N I(\rvx_A(i); \rvy_B(i) | \rvz_{AE}(i), \rvg_{AE}(i), \rvh_{AB}(i))\notag\\
&\le\sum_{i=1}^N E\left[\log\left(1 + \frac{P_i(\rvh_{AB})|\rvh_{AB}|^2}{1 + P_i(\rvh_{AB})|\rvg_{AE}|^2}\right)\right]  \\
&= E\left[\sum_{i=1}^N \log\left(1 + \frac{P_i(\rvh_{AB})|\rvh_{AB}|^2}{1 + P_i(\rvh_{AB})|\rvg_{AE}|^2}\right)\right]\\
&\le N  E\left[ \log\left(1 + \frac{\frac{1}{N}\sum_{i=1}^N P_i(\rvh_{AB})|\rvh_{AB}|^2}{1 + \frac{1}{N}\sum_{i=1}^NP_i(\rvh_{AB})|\rvg_{AE}|^2}\right)\right]\label{eq:concave}\\
&= N  E\left[ \log\left(1 + \frac{P(\rvh_{AB})|\rvh_{AB}|^2}{1 + P(\rvh_{AB})|\rvg_{AE}|^2}\right)\right],\label{eq:pwr-1}\\
&= NI(\rvx_A; \rvy_B | \rvz_{AE}, \rvg_{AE}, \rvh_{AB})\label{eq:pwr-11}
\end{align}
where $P(\rvh_{AB}) \defeq \frac{1}{N} \sum_{i=1}^N P_i(\rvh_{AB})$ is the average power allocated when
the fading state equals $\rvh_{AB}$ and~\eqref{eq:concave} uses the fact that the function $f(x) = \log\left(1 + \frac{ax}{1+bx}\right)$
is a concave function in $x$ and hence Jensen's inequality applies. Also note that
\begin{align}
E[P(\rvh_{AB})] &= E\left[\frac{1}{N}\sum_{i=1}^N P_i(\rvh_{AB}) \right] \\
&= \frac{1}{N}\sum_{i=1}^N E[P_i(\rvh_{AB})]\\
&\le \frac{1}{N}\sum_{i=1}^N P_i \le P.
\end{align}
In a similar fashion we can show that
\begin{align}
&\sum_{i=1}^N I(\rvx_B(i); \rvy_A(i) | \rvz_{BE}(i), \rvg_{BE}(i), \rvh_{BA}(i)) \notag\\
&\le N  E\left[ \log\left(1 + \frac{P(\rvh_{BA})|\rvh_{BA}|^2}{1 + P(\rvh_{BA})|\rvg_{BE}|^2}\right)\right].\label{eq:pwr-2}\\
&= NI(\rvx_B; \rvy_A | \rvz_{BE}, \rvg_{BE}, \rvh_{BA})\label{eq:pwr-22}
\end{align}
The upper bound~\eqref{eq:R+} follows by substituting in~\eqref{eq:pwr-11} and~\eqref{eq:pwr-22} into~\eqref{eq:final-sum}.
This completes the proof of Theorem~\ref{thm:ub}.

\section{Proof of Theorem~\ref{thm:disc-lb}}

\label{sec:disc-lb}

We  sketch the coding scheme associated with Theorem~\ref{thm:disc-lb}, where we assume
the availability of a public discussion channel. In our proof  we assume that the channel
input and output symbols as well as the fading gains are discrete valued. We do not address the discretization of these
random variables in this section; it will be addressed in the proof of Theorem~\ref{thm:lb-nodisc},
when we do not assume the availability of a discussion channel.

\subsection{Generation of Correlated Source Sequences}
In each coherence block we reserve the first symbol for transmission of a pilot symbol
\begin{multline}
\rvx_A(iT+1) = \rvx_B(iT+1)  =\sqrt{P_1}, \\ i=0,\ldots, \frac{N}{T}-1.
\label{eq:training}
\end{multline}
and the remainder of the symbols for transmission of i.i.d.\  random variables i.e., 
\begin{multline}\rvx_A(t) \sim p_{\rvx_A}(\cdot),
\rvx_B(t) \sim p_{\rvx_B}(\cdot), \forall t \in [1, N],  t \neq iT+1.\label{eq:gauss}
\end{multline}

The legitimate receivers $B$ and $A$ use the corresponding output symbols $\rvy_B(iT+1)$
and $\rvy_A(iT+1)$ for estimating the underlying channel gains $\hh_{AB}(i+1)$ and $\hh_{BA}(i+1)$ 
in~\eqref{eq:hh-AB} and~\eqref{eq:hh-BA} respectively.

We use the bold-font notation $\brvbx_A(t)$ to denote the  sequence of ${T-1}$ i.i.d.\ symbols transmitted in coherence block $t$.
and let $\brvby_B(t)$ denote the corresponding output symbol vector in block $t$.  The source sequences at the terminals at the end of $K$ coherence blocks are indicated in Table~\ref{tab:seq}.

\subsection{Information Reconciliation}
Terminals $A$ and $B$ exchange public messages over the discussion channel to agree on identical sequence pair $(\hh_{AB}^K, \hh_{BA}^K, \brvby_{AB}^K, \brvby_{BA}^K)$. The public messages transmitted  are as follows
(i) Terminal $A$ bins the sequences $\{\hh_{BA}^K\}$ into $2^{K(H(\hh_{BA}|\hh_{AB} )+\eps)}$ bins and transmits the associated bin index $\phi_A$ over the discussion channel. 
(ii) Terminal $B$ bins the sequences $\{\hh_{AB}^K\}$ into $2^{K(H(\hh_{AB}|\hh_{BA})+\eps)}$ bins and transmits the associated  bin index $\phi_B$ over the discussion channel.
(iii) Terminal $A$ bins the sequences $\{\brvby_A^K\}$ into $2^{K(H(\brvby_A|\brvbx_B, \hh_{AB},\hh_{BA})+\eps)}$ and transmits the associated bin index $\psi_A$ over the discussion channel.
(iv) Terminal $B$ bins the sequence $\{\brvby_B^K\}$ into $2^{K(H(\brvby_B|\brvbx_A, \hh_{AB},\hh_{BA})+\eps)}$ and transmits the associated bin index $\psi_B$ over the discussion channel. 

Using standard arguments (see e.g.,~\cite[Chap.~10]{ElGamalKim}) Terminal $A$ can reconstruct the sequence pair $(\hh_{AB}^K, \brvby_B^K)$ (with high probability) given the messages $(\phi_B, \psi_B)$ and its side information  $(\hh_{BA}^K, \brvbx_A^K)$. Terminal $B$ can reconstruct the sequence pair $(\hh_{BA}^K, \brvby_A^K)$ (with high probability) given the messages $(\phi_A, \psi_A)$ and its side information sequence $(\hh_{AB}^K, \brvbx_B^K)$. Thus at the end of this phase both terminals have access to the common sequence pair $(\brvby_A^K, \brvby_B^K, \hh_{AB}^K, \hh_{BA}^K)$.  The eavesdropper has access to $(\brvbz^K, \rvbg^K,\phi_A, \phi_B, \psi_A, \psi_B)$.

\subsection{Equivocation Analysis}

 We lower bound the equivocation at the eavesdropper as follows:
\begin{align}
&\quad H(\brvby_A^K, \brvby_B^K,\hh_{AB}^K, \hh_{BA}^K|\brvbz^K, \rvbg^K,\phi_A, \phi_B, \psi_A, \psi_B)\notag\\
&=H(\brvby_A^K, \brvby_B^K,\hh_{AB}^K, \hh_{BA}^K|\brvbz^K, \rvbg^K) - H(\phi_A) - H(\phi_B)\notag\\
&\qquad - H(\psi_A) - H(\psi_B)\notag\\
&=H(\hh_{AB}^K, \hh_{BA}^K) + H(\brvby_A^K, \brvby_B^K|\brvbz^K, \rvbg^K,\hh_{AB}^K, \hh_{BA}^K) \notag\\ 
&\qquad - H(\phi_A) - H(\phi_B)- H(\psi_A) - H(\psi_B)\label{eq:hh-indep}\\
&\ge H(\hh_{AB}^K, \hh_{BA}^K) + H(\brvby_A^K, \brvby_B^K|\brvbz^K, \rvbg^K,\rvh_{AB}^K, \rvh_{BA}^K) \notag\\ 
&\qquad - H(\phi_A) - H(\phi_B)- H(\psi_A) - H(\psi_B)\label{eq:hh-cond}\\
&=\!\! H(\hh_{AB}^K, \hh_{BA}^K)\!\! +\!\! H(\brvby_A^K|\brvbz_B^K, \rvg_{BE}^K, \rvh_{BA}^K)\notag\\ &\qquad+\!\! H(\brvby_B^K|\brvbz_A^K, \rvg_{AE}^K, \rvh_{AB}^K)\notag\\ 
&\qquad - H(\phi_A) - H(\phi_B)- H(\psi_A) - H(\psi_B)\label{eq:y-cond}\\
&=KH(\hh_{AB},\hh_{BA}) + KH(\brvby_A|\brvbz_B, \rvg_{BE}, \rvh_{BA})\notag\\
&\qquad+ KH(\brvby_B|\brvbz_A, \rvg_{AE}, \rvh_{AB})\notag\\ 
&\qquad - H(\phi_A) - H(\phi_B)- H(\psi_A) - H(\psi_B)\label{eq:hh-mem}
\end{align}
where we use the chain rule of entropy and the fact that $(\hh_{AB}^K,\hh_{BA}^K)$ is independent of $(\brvbz^K, \rvbg^K)$ in~\eqref{eq:hh-indep};
in~\eqref{eq:hh-cond} we use the fact that conditioning reduces entropy as well as $\hh_{AB}^K$ and $\hh_{BA}^K$ are independent of the remaining variables gives $\rvh_{AB}^K$ and $\rvh_{BA}^K$ respectively; in~\eqref{eq:y-cond} we use the fact that $\brvbx_A^K$ and $\brvbx_B^K$ are sampled
independently and hence the following Markov conditions hold:
\begin{align}
&\brvby_A^K \rightarrow (\brvbz_B^K,\rvg_{BE}^K, \rvh_{BA}^K) \rightarrow (\brvby_B^K, \rvg_{AE}^K, \rvh_{AB}^K) \\
&\brvby_B^K \rightarrow (\brvbz_A^K, \rvg_{AE}^K, \rvh_{AB}^K) \rightarrow (\brvby_A^K, \rvg_{BE}^K, \rvh_{BA}^K);
\end{align}
Eq.~\eqref{eq:hh-mem} follows from the fact that the sequence pair $(\rvh_{AB}^K, \rvh_{BA}^K)$ is sampled i.i.d.\
and furthermore $\brvbx_A^K$ and $\brvbx_B^K$ are also sampled i.i.d.\ Furthermore,
\begin{align}
&KH(\rvh_{AB},\rvh_{BA}) - H(\phi_A) - H(\phi_B)\notag\\
&\ge KH(\hh_{AB},\hh_{BA}) - H(\hh_{BA}|\hh_{AB}) - H(\hh_{AB}|\hh_{BA})- 4K\eps\\
&= KI(\hh_{AB};\hh_{BA})-4K\eps.\label{eq:lb-ent-hh}
\end{align}
Using the fact that $\brvbx_A$ is sampled i.i.d. in~\eqref{eq:xa-iid} and letting $\eps' = \frac{\eps}{T-1}$ we have
\begin{align}
& K H(\brvby_A|\brvbz_B, \rvg_{BE}, \rvh_{BA}) - H(\psi_A) \notag\\
&\ge K H(\brvby_A|\brvbz_B, \rvg_{BE}, \rvh_{BA}) - KH(\brvby_A  | \brvbx_B, \hh_{AB}, \hh_{BA}) - 2K\eps\\
&\!\!=\!\! K(T-1)\left\{H(\rvy_A|\rvz_B, \rvg_{BE}, \rvh_{BA})\!\!-\!\! H(\rvy_A | \rvx_B, \hh_{AB}, \hh_{BA})\!\!- \!\!2\eps'\right\}\label{eq:xa-iid}\\
&= K(T-1)\!\!\left\{I(\rvy_A; \rvx_B, \hh_{AB}, \hh_{BA})\!\! -\!\! I(\rvy_A; \rvz_B, \rvg_{BE}, \rvh_{BA})\!\! -\!\!2\eps' \right\}.\label{eq:disc-sm-2}
\end{align}

In a similar fashion we can show that
\begin{align}
&K H(\brvby_B | \brvbz_A, \rvg_{AE},\rvh_{AB}) - H(\psi_B) \notag\\
&\ge K(T-1)\left\{I(\rvy_B; \rvx_A, \hh_{AB}, \hh_{BA})\!\!-\!\! I(\rvy_B; \rvz_A, \rvg_{AE}, \rvh_{AB})\!\!-\!\!2\eps'\right\}\label{eq:disc-sm-3}
\end{align}

Combining~\eqref{eq:lb-ent-hh},~\eqref{eq:disc-sm-2} and~\eqref{eq:disc-sm-3} we obtain that
\begin{align}
&H(\brvby_A^K, \brvby_B^K,\hh_{AB}^K, \hh_{BA}^K|\brvbz^K, \rvbg^K,\phi_A, \phi_B, \psi_A, \psi_B)\notag\\
&\ge KI(\hh_{AB};\hh_{BA})- 4K\eps - 4K\eps'\notag\\
& + K(T-1)\!\!\left\{I(\rvy_A; \rvx_B, \hh_{AB}, \hh_{BA})\!\! -\!\! I(\rvy_A; \rvz_B, \rvg_{BE}, \rvh_{BA})\!\! -\!\!2\eps' \right\}\notag\\
& +K(T-1)\left\{I(\rvy_B; \rvx_A, \hh_{AB}, \hh_{BA})\!\!-\!\! I(\rvy_B; \rvz_A, \rvg_{AE}, \rvh_{AB})\!\!-\!\!2\eps'\right\}\label{eq:R-disc-lb-eps}
\end{align}

In the final step, the equivocation lower bound in~\eqref{eq:R-disc-lb-eps} can be used to generate a secret-key. The associated construction is discussed in section~\ref{subsec:sec-key-gen} when we consider the case without a public discussion channel. We omit the details as they are completely analogous.

\subsection{Proof of Prop.~\ref{prop:pdisc}}
\label{sec:pdisc}

We evaluate the rate expression~\eqref{eq:lb-disc} in Theorem~\ref{thm:disc-lb} by selecting $\rvx_A \sim \CN(0,P_2)$ and $\rvx_B \sim \CN(0,P_2)$. Note that the MMSE estimates of $\rvh_{AB}$ and $\rvh_{BA}$ can be expressed as
\begin{align}
\hh_{AB}  &= \al \rvh_{AB} + \rve_{AB}\\
\hh_{BA} &= \al \rvh_{BA} + \rve_{BA}
\end{align}
where $\al = \frac{P_1}{P_1+1}$ is the linear MMSE coefficient and $\rve_{AB}, \rve_{BA} \sim \CN(0, \al(1-\al))$ are mutually independent and independent of $(\rvh_{AB}, \rvh_{BA})$. 

Thus  $\hh_{AB}$ and $\hh_{BA}$ are jointly Gaussian random variables with $E[|\hh_{AB}|^2] = E[|\hh_{BA}|^2] = \al$ and $E[\hh_{AB} \hh_{BA}^\dagger] = \al^2E[\rvh_{AB}\rvh_{BA}^\dagger]$. Hence one can show that \begin{align}
I(\hh_{AB}; \hh_{BA}) &= -\log\left(1-\al^2\rho^2\right). \label{Rp-Training-Comp}
 \end{align}
To compute the remaining terms in~\eqref{eq:lb-disc}
note that
\begin{align}
&I(\rvy_A; \rvx_B, \hh_{AB},\hh_{BA})-I(\rvy_A; \rvz_B, \rvg_{BE}, \rvh_{BA})\notag\\
&\ge h(\rvy_A |\rvz_B, \rvg_{BE},  \rvh_{BA} ) - h(\rvy_A | \rvx_B, \hh_{AB}, \hh_{BA})\\
&\ge h(\rvy_A |\rvz_B, \rvg_{BE},  \rvh_{BA} ) - h(\rvy_A | \rvx_B,  \hh_{BA})\label{eq:tap1}
\end{align}

The since  $\rvx_B \sim \CN(0, P_2),$ the first-term can be evaluated as:
\begin{align}
h(\rvy_A |\rvz_B, \rvg_{BE},  \rvh_{BA} ) &= E\left[\log\!2\pi e\left(1 +\frac{P_2 |\rvh_{BA}|^2}{1 + P_2|\rvg_{BE}|^2}\right)\right].\label{eq:tap2}
\end{align}

To evaluate the second term we introduce $\tilde{\rvh}_{BA} \defeq \rvh_{BA}-\hat{\rvh}_{BA}$. Note that since $\hh_{BA}$ is the MMSE estimate of $\rvh_{BA}$, we have that $\tilde{\rvh}_{BA} \sim \CN(0, \frac{1}{1+P_1})$
and hence
\begin{align}
h(\rvy_A | \rvx_B,  \hh_{BA}) &\le h(\rvy_A - \hh_{BA} \rvx_B)\\
&= h(\tilde{\rvh}_{BA} \rvx_B + \rvz_B)\label{eq:tap3a}\\
&\le \log2\pi e\left(1 + \frac{P_2}{1+P_1}\right)\label{eq:tap3}
\end{align}
where we use the fact that conditioning reduces the differential entropy in~\eqref{eq:tap3a}
and that  Gaussian distribution maximizes  entropy among all distributions with a fixed
variance in~\eqref{eq:tap3}.

From~\eqref{eq:tap1},~\eqref{eq:tap2} and~\eqref{eq:tap3} we have that
\begin{align}
R_\mrm{P, BA}^- &= I(\rvy_A; \rvx_B, \hh_{AB},\hh_{BA})-I(\rvy_A; \rvz_B, \rvg_{BE}, \rvh_{BA}) \\
&=E\left[\log\left(1 +\frac{P_2 |\rvh_{BA}|^2}{1 + P_2|\rvg_{BE}|^2}\right)\right] -\log\left(1 + \frac{P_2}{1+P_1}\right) 
\label{eq:Rp-BA-Comp}
\end{align}
which establishes~\eqref{eq:Rp_BA}. The expression~\eqref{eq:Rp_AB} for $R_\mrm{P,AB}^-$ can be established in a similar manner.

\section{Proof of Theorem~\ref{thm:lb-nodisc}}

\label{sec:nodisc}

\begin{center}
\begin{table*}[!htb]
\centering
\caption{Quantization Codebooks for Source and Channel Sequences}
\begin{tabular}{|c|c|c|c|} \hline
Codebook & Total Codewords & Total Bins & Codewords Per Bin \\\hline
Channel Quantization (Node A): $\cC_A^\rvu$ & $2^{K (I(\rvu_{BA}; \hh_{BA})+\eps)}$ & $2^{K (I(\rvu_{BA}; \hh_{BA}| \hh_{AB})+ 2\eps)}$ & $2^{K(I(\rvu_{BA}; \hh_{AB})-\eps)}$\\\hline
Channel Quantization (Node B): $\cC_B^\rvu$ & $2^{K(I(\rvu_{AB}; \hh_{AB})+\eps)}$  &$ 2^{K(I(\rvu_{AB}; \hh_{AB}| \hh_{BA})+2\eps)}$ & $2^{K(I(\rvu_{AB}; \hh_{BA})-\eps)}$\\\hline
Source Quantization (Node A): $\cC_A^\rvv$ & $2^{K(I(\rvbv_A;\rvby_A )+\eps)}$ & $2^{K(I(\rvbv_A; \rvby_A| \rvbx_B, \rvbu)+2\eps)}$&$2^{K(I(\rvbv_A; \rvbx_B, \rvbu)-\eps)}$\\\hline
Source Quantization (Node B): $\cC_B^\rvv$ & $2^{K(I(\rvbv_B; \rvby_B)+\eps)}$ & $2^{K(I(\rvbv_B;\rvby_B|\rvbx_A, \rvbu)+2\eps)}$ & $2^{K(I(\rvbv_B; \rvbx_A, \rvbu) -\eps)}$\\\hline
\end{tabular}
\label{tab:quant}
\end{table*}\end{center}

In absence of the discussion channel, we need to transmit the public messages during the error-reconciliation phase
using the same wireless channel. We again use $K$ coherence blocks to generate source and channel sequences as in
Table~\ref{tab:seq} using the training and communication phases as in~\eqref{eq:training} and~\eqref{eq:gauss}.
Thereafter we use ${\eps_1 \cdot K}$ coherence blocks for generating the secret-key from the channel sequences
and another ${\eps_2 \cdot K}$ coherence blocks for generating the secret-key from the source sequences. 
Due to this overhead, the total rate achieved is scaled by a factor of $\frac{1}{1+\eps_1 + \eps_2}$  
as in~\eqref{eq:R-}.

In the secret-key generation phase we suitably quantize each of the channel-state and source sequences to satisfy the rate constraint
for public messages. We describe the key steps of the coding scheme below.

\subsection{Quantization of Continuous Valued Random Variables}
While our lower bound is stated for continuous valued random variables, in the analysis of our coding scheme, we follow the approach in~\cite[pp.~50, sec. 3.4.1]{ElGamalKim} and assume that associated random 
variables are discrete valued.  In particular let $\rvx_B \sim p_{\rvx_B}(\cdot)$ be any continuous and bounded density function.
We consider an associated quantized version of $\rvx_B$, 
$[\rvx_{B}]_j \in  \{-j\Delta, -(j-1)\Delta, \ldots, (j-1)\Delta, j\Delta\}, $ with $\Delta = \frac{1}{\sqrt{j}}$, obtained by mapping $\rvx_B$
to the closest quantization point such that $|[\rvx_{B}]_j| \le |\rvx_B|$ holds.  Likewise for $\rvy_{A,j} = \rvh_{BA} [\rvx_{B}]_j+\rvn_A,$ 
let $[\rvy_{A,j}]_k$ be defined in a similar manner. The random variables $[\rvx_{A}]_j$ and $[\rvy_{B,j}]_k$ as well as the variables $[\rvu_{BA}]_k$, $[\rvu_{AB}]_k$, $[\rvv_A]_k$ and $[\rvv_B]_k$ associated with the quantization codebooks
are defined analogously. As we make the quantization finer and finer, the rate  using these
discrete random variables will approach the proposed lower bound with continuous valued random variables~\cite[pp.~65, sec.~3.8]{ElGamalKim}. In what follows, we will treat
these random variables as discrete valued and use the notion of robust-typicality~\cite[pp.~25-31, sec.~2.4]{ElGamalKim}.

\subsection{Codebook Construction}

Suitable quantization codebooks for the source and channel sequences are necessary in order to satisfy the rate-constraint associated with the wireless channel. Table~\ref{tab:quant} illustrates the different quantization codebooks used in our coding scheme, which are further described below.

\begin{itemize}
\item {\bf(Channel Quantization, Node A)}~$\cC_A^\rvu$:~We sample a total of ${M_A^\rvu = 2^{K (I(\rvu_{BA}; \hh_{BA})+\eps)}}$ codewords $\{\rvu_{BA}^K\}$ i.i.d.\ each of length $K$,
according to a distribution $p_{\rvu_{BA}}(\cdot)$. These codewords are divided into a total of ${T_A^\rvu = 2^{K (I(\rvu_{BA}; \hh_{BA}| \hh_{AB})+ 2\eps})}$
bins such that there are ${N_A^\rvu = 2^{K(I(\rvu_{BA}; \hh_{AB})-\eps)}} $ codewords per bin. 

\item {\bf(Channel Quantization, Node B)}~$\cC_B^\rvu$:~We sample a total of ${M_B^\rvu = 2^{K(I(\rvu_{AB}; \hh_{AB})+\eps)}}$ codewords $\{\rvu_{AB}^K\}$ i.i.d.\ according to a distribution $p_{\rvu_{AB}}(\cdot)$.
These codewords are divided into a total of ${T_B^\rvu = 2^{K(I(\rvu_{AB}; \hh_{AB}| \hh_{BA})+2\eps)}}$ bins such that there are ${N_B^\rvu = 2^{K(I(\rvu_{AB}; \hh_{BA})-\eps)}}$
codewords per bin. 

\item {\bf(Source Quantization, Node A)}~$\cC_A^\rvv$:~We sample a total of ${M_A^\rvv = 2^{K(I(\brvbv_A;\brvby_A )+\eps)}}$ codewords $\{\brvbv_A^K\}$ i.i.d.\ according to the distribution $p_{\brvbv_A}(\cdot)$. These codeword are partitioned
into a total of ${T_A^\rvv = 2^{K(I(\brvbv_A; \brvby_A| \brvbx_B, \rvbu)+2\eps)}}$ bins such that there are ${N_B^\rvv = 2^{K(I(\brvbv_A; \brvbx_B, \rvbu)-\eps)}}$ codewords per bin. 

\item {\bf(Source Quantization, Node B)}~$\cC_B^\rvv$:~We sample a total of ${M_B^\rvv = 2^{K(I(\brvbv_B; \brvby_B)+\eps)}}$ codewords $\{\brvbv_B^K\}$ i.i.d.\ according to the distribution $p_{\brvbv_B}(\cdot)$. These codewords are partitioned
into a total of ${T_B^\rvv = 2^{K(I(\brvbv_B;\brvby_B|\brvbx_A, \rvbu)+2\eps)}}$ bins such that there are ${N_B^\rvv = 2^{K(I(\brvbv_B; \brvbx_A, \rvbu) -\eps)}}$ codewords per bin.

%\item (Key Generation)~$\cC_K$: Let $\cC = \cC_A^\rvu \times \cC_B^\rvu \times \cC_A^\rvv \times \cC_B^\rvv$ be the collection of all codewords %sequences $(\rvu_{BA}^K, \rvu_{AB}^K, \brvbv_A^K, \brvbv_B^K) \in \cC$. We randomly and uniformly partition them into a total of $2^{nR_K}$ bins.
\end{itemize}

\subsection{Encoding}
\begin{itemize}
\item Terminal $A$ finds a codeword $\rvu_{BA}^K \in \cC_A^\rvu$ that is jointly typical with $\hh_{BA}^K$. It finds the bin index $\phi_A$ of $\rvu_{BA}^K$ and transmits it as a public message over the channel. 
\item Terminal $B$ finds a codeword $\rvu_{AB}^K \in \cC_A^\rvu$ that is jointly typical with $\hh_{AB}^K$. It finds the bin index $\phi_B$ of  $\rvu_{AB}^K$ and transmits it as a public message over the channel.
\item Terminal $A$ finds a codeword $\brvbv_A^K \in \cC_A^\rvv$  that is jointly typical with $\brvby_A^K$. It finds the bin index $\psi_A$ of $\brvbv_A^K$ and transmits it as a public message over the channel.
\item Terminal $B$ finds a codeword $\brvbv_B^K \in \cC_B^\rvv$ that is jointly typical with $\brvby_B^K$. It finds the bin index $\psi_B$ of $\brvbv_B^K$ and transmits it as a public message over the channel.
\end{itemize}
We let $\cJ_1$ denote the error event that either terminal $A$ or terminal $B$ fails to find a codeword sequence typical with the source sequence and let $\cJ_2$ denote the error event associated with the transmission of the public message.

\subsection{Decoding}
Terminal $A$, upon receiving $\phi_B$, searches for a codeword $\rvu_{AB}^K$ in the bin-index $\phi_B\in \cC_B^\rvu$ that is jointly typical with $\hh_{BA}^K$. It declares an error if none or more than one sequence are jointly typical.  Terminal $A$, upon receiving $\psi_B$, searches for a codeword $\brvbv_B^K$ in the bin-index $\psi_B \in \cC_B^\rvv$ that is jointly typical with $(\brvby_A^K,\rvbu^K)$. It declares an error if none or more than one sequence appears to be typical.  We define $\cE_1 = \Pr(\hat{\rvu}_{AB}^K \neq \rvu_{AB}^K \cup \hat{\brvbv}_B^K \neq \brvbv_B^K)$.

Terminal $B$, upon receiving $\phi_A$, searches for a codeword $\rvu_{BA}^K$ in the bin-index $\phi_A \in \cC_A^\rvu$  that is jointly typical with $\hh_{AB}^K$. It declares an error if none or more than one sequences are jointly typical. Terminal $B$, upon receiving $\psi_A$, searches for a codeword $\brvbv_A^K$ in the bin-index $\psi_A \in \cC_A^\rvv$ that is jointly typical with $(\brvby_B^K,\rvbu^K)$. It declares
an error if none or more than one sequence appears to be typical. We define $\cE_2 = \Pr(\hat{\rvu}_{BA}^K \neq \rvu_{BA}^K \cup \hat{\brvbv}_A^K \neq \brvbv_A^K)$.

\iffalse
Terminal $A$  finds the bin-index $\hat{\rvk}_A$ of the codeword pair $(\rvu_{BA}^K, \hat{\rvu}_B^K, \rvbv_A^K, \hat{\rvbv}_B^K)$ in $\cC_K$ and declares it to be the secret key. Terminal $B$ finds the bin-index $\hat{\rvk}_B$ of the codeword pair $(\hat{\rvu}_A^K, \rvu_{AB}^K, \hat{\rvbv}_A^K, \rvbv_B^K)$ in $\cC_K$ and declares it to be the secret key. The associated error event is $\cE=\{\hat{\rvk}_A \neq \hat{\rvk}_B\}$.
\fi

\subsection{Error Analysis (Sketch)}

Let ${\cE = \cE_1 \cup \cE_2 \cup \cJ_1 \cup \cJ_2}$ denote the union of all error events. It suffices to show that both $\cE_i$ and $\cJ_i$ have vanishing probabilities for $i\in \{1,2\}$.
Since the total number of codeword sequences in $\cC_A^\rvu$ equals $2^{K(I(\rvu_{BA}; \hh_{BA})+\eps)}$, it follows from the Covering Lemma~\cite[pp.~62, Lemma~3.3]{ElGamalKim}
that the probability that no jointly typical codeword exists goes to zero as $K\rightarrow \infty$. By a similar argument we have that $\Pr(\cJ_1) \rightarrow 0$.  Since ${H(\phi_A) \le K(I(\rvu_{BA}; \hh_{BA}|\hh_{AB})+2\eps)}$ and since the rate constraint~\eqref{eq:Rc-u2}  is imposed,  terminal $B$ decodes $\phi_A$ with high probability. In a similar fashion it can be argued that $\Pr(\cJ_2) \rightarrow 0$ as $K \rightarrow \infty$. Since the total number of codewords in each bin of $\cC_B^\rvu$ is  $2^{K(I(\rvu_{AB};\hh_{BA} )-\eps)}$
it follows from the Packing Lemma~\cite[pp.~46, Lemma~3.1]{ElGamalKim} that terminal $A$ will decode $\{\hat{\rvu}_{AB}^K = \rvu_{AB}^K\}$ with high probability. By analogous arguments it can be verified
that $\Pr(\cE_1) \rightarrow 0$ and $\Pr(\cE_2) \rightarrow 0$ as $K \rightarrow\infty$.

\subsection{Equivocation Analysis}
Let us define $\brvbz^K = (\brvbz_A^K, \brvbz_B^K)$ and $\rvbg^K  = (\rvg_{AE}^K, \rvg_{BE}^K)$. 
We first consider the conditional entropy term\footnote{All the expressions are conditioned on the codebook $\cC$. We suppress this
conditioning for sake of convenience.}
{\allowdisplaybreaks{\begin{align}
&H(\rvu_{BA}^K,\rvu_{AB}^K, \brvbv_A^K, \brvbv_B^K | \phi_A, \phi_B, \psi_A, \psi_B, \brvbz^K, \rvbg^K) \ge \notag\\
&\ge H(\rvu_{BA}^K, \rvu_{AB}^K, \brvbv_A^K, \brvbv_B^K | \brvbz^K, \rvbg^K) - H(\phi_A) \notag\\ &\quad - H(\phi_B)-H(\psi_A) -H(\psi_B) \\
&= H(\rvu_{BA}^K, \rvu_{AB}^K | \brvbz^K, \rvbg^K) \notag\\ &\qquad + H(\brvbv_A^K, \brvbv_B^K | \rvu_{BA}^K, \rvu_{AB}^K, \brvbz^K, \rvbg^K) \notag\\
&\quad - H(\phi_A)  - H(\phi_B)-H(\psi_A) -H(\psi_B) \\
&=H(\rvu_{BA}^K, \rvu_{AB}^K ) + H(\brvbv_A^K, \brvbv_B^K | \rvu_{BA}^K, \rvu_{AB}^K, \brvbz^K, \rvbg^K) \notag\\
&\quad - H(\phi_A)  - H(\phi_B)-H(\psi_A) -H(\psi_B) \label{eq:u-x-indep} 
\end{align}}}
where~\eqref{eq:u-x-indep} follows from the fact that $(\rvu_{BA}^K, \rvu_{AB}^K)$ are independent of $(\brvbx_A^K, \brvbx_B^K)$
and hence $(\brvbz^K, \rvbg^K)$.

From the construction recall that $\phi_A$, $\phi_B$, $\psi_A$ and $\psi_B$ denote the bin indices in $\cC_A^\rvu$, 
 $\cC_B^\rvu$, $\cC_A^\rvv$ and $\cC_B^\rvv$ respectively. Therefore
\begin{align}
&H(\phi_A) \le I(\rvu_{BA};\hh_{BA}|\hh_{AB})+2\eps,\label{eq:bin-bnd-1}\\
&H(\phi_B) \le I(\rvu_{AB}; \hh_{AB}|\hh_{BA})+2\eps,\\
&H(\psi_A) \le I(\brvbv_A; \brvby_A|\brvbx_B, \rvbu )+2\eps,\\
&H(\psi_B) \le I(\brvbv_B; \brvby_B|\brvbx_A, \rvbu)+2\eps.\label{eq:bin-bnd-2}
\end{align}
Thus we need to lower bound the joint-entropy  terms in~\eqref{eq:u-x-indep}.
We use the following Lemma.

\begin{lemma}
\label{lem:cond-ent}
Consider a triplet of random variables $(\rvu,\rvx,\rvy)$ that satisfy $\rvu\rightarrow\rvx\rightarrow\rvy$. 
Suppose that the sequence $\rvx^n$ is sampled i.i.d.\ $p_{\rvx}(\cdot)$.  Suppose that we generate a set $\cA$ 
consisting of $M=2^{nR}$ sequences $\{\rvu^n(i)\}_{1\le i \le M}$
each sampled i.i.d.\ from a distribution $p_{\rvu}(\cdot)$. Given $\rvx^n$ we select an index $L$ such that $(\rvx^n, \rvu^n(L))\in \cT_\eps^n(\rvx,\rvu)$.
If no such index exists we select $L$ uniformly at random.
If $\rvy^n$ is sampled i.i.d. from the conditional distribution $p_{\rvy|\rvx}(\cdot|\cdot)$, then
\begin{align}
&H(\rvy^n|\rvu^n(L)) \ge n\left\{H(\rvy|\rvu) + I(\rvx;\rvu)-R-\g_n(\eps))\right\}\label{eq:cond-ent}
\end{align}for any $R > I(\rvx;\rvu)$ and for some $\g_n(\eps)$ that goes to zero as $\eps \rightarrow 0$ and $n\rightarrow\infty$.
\end{lemma}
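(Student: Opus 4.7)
The plan is to reduce the claim to a chain-rule identity that relates $H(\rvy^n|\rvu^n(L))$ to the single residual conditional entropy $H(\rvx^n|\rvy^n,\rvu^n(L))$, which is in turn controlled via joint typicality. Throughout, I condition on the codebook $\cA$ (suppressed notationally). Because $\rvu^n(L)$ is a function of $(\rvx^n,\cA)$ and $\rvy^n$ is generated from $\rvx^n$ via the memoryless channel $p(\rvy|\rvx)$, the Markov chain $\rvu^n(L)\to\rvx^n\to\rvy^n$ holds, so
\begin{align*}
H(\rvy^n|\rvu^n(L)) = H(\rvx^n|\rvu^n(L)) + nH(\rvy|\rvx) - H(\rvx^n|\rvy^n,\rvu^n(L)).
\end{align*}
Expanding $H(\rvx^n|\rvu^n(L)) = nH(\rvx) - H(\rvu^n(L)) + H(\rvu^n(L)|\rvx^n)$ and invoking the single-letter identity $H(\rvx,\rvy) = H(\rvy|\rvu) + I(\rvx;\rvu) + H(\rvx|\rvy,\rvu)$ (which uses $I(\rvy;\rvu|\rvx)=0$), I obtain
\begin{align*}
H(\rvy^n|\rvu^n(L)) &= n[H(\rvy|\rvu) + I(\rvx;\rvu)] + nH(\rvx|\rvy,\rvu) \\
&\quad + H(\rvu^n(L)|\rvx^n) - H(\rvu^n(L)) - H(\rvx^n|\rvy^n,\rvu^n(L)).
\end{align*}
The trivial bounds $H(\rvu^n(L))\le nR$ (since $L\in\{1,\ldots,M\}$) and $H(\rvu^n(L)|\rvx^n)\ge 0$ already dispose of two of the residual terms.

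The crux is to show $H(\rvx^n|\rvy^n,\rvu^n(L)) \le nH(\rvx|\rvy,\rvu) + n\g_n(\eps)$ for some $\g_n(\eps)\to 0$. Since $R > I(\rvx;\rvu)$, the covering lemma guarantees that a codeword jointly $\eps$-typical with $\rvx^n$ exists with probability $1-o(1)$; because $\rvy^n$ is produced by the memoryless channel $p(\rvy|\rvx)$ independently of $\cA$ given $\rvx^n$, the conditional typicality lemma then promotes this pair to $(\rvx^n,\rvu^n(L),\rvy^n)\in\cT^n_{\eps'}(\rvx,\rvu,\rvy)$ with probability $1-o(1)$. On this event, the standard conditional-size bound on jointly typical sequences yields $H(\rvx^n|\rvy^n,\rvu^n(L),\text{typical})\le nH(\rvx|\rvy,\rvu)(1+\eps')$; the atypical contribution is bounded by $\Pr(\text{atyp})\cdot n\log|\cX|+1$, which is $o(n)$ and can be absorbed into $\g_n(\eps)$. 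Substituting this estimate into the identity above and cancelling the $nH(\rvx|\rvy,\rvu)$ terms produces $H(\rvy^n|\rvu^n(L)) \ge n[H(\rvy|\rvu) + I(\rvx;\rvu) - R - \g_n(\eps)]$, exactly the claim.

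The main obstacle is the careful handling of the typicality slack: one must verify that the jointly $\eps$-typical pair guaranteed by the covering lemma, combined with the conditionally typical channel output $\rvy^n$, indeed produces a jointly typical triple (which relies precisely on the Markov chain $\rvu\to\rvx\to\rvy$ together with the memoryless channel), and that the atypical-event penalty truly goes to zero at rate $o(1)$ per symbol. Once these typicality arguments are in place, the remainder of the proof is pure chain-rule bookkeeping.
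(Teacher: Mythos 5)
Your proposal is correct and follows essentially the same route as the paper's own proof: both reduce the claim, via the chain rule and the bound $H(\rvu^n(L))\le nR$, to upper-bounding $H(\rvx^n|\rvy^n,\rvu^n(L))$ by $nH(\rvx|\rvy,\rvu)$ plus vanishing slack, which is then obtained from the covering lemma (using $R>I(\rvx;\rvu)$), the conditional typicality argument for the triple $(\rvx^n,\rvu^n(L),\rvy^n)$, and the cardinality bound on conditionally typical sets, with the atypical event absorbed into $\g_n(\eps)$. The only differences are cosmetic bookkeeping (you expand $H(\rvx^n|\rvu^n(L))$ and invoke the single-letter identity $H(\rvx,\rvy)=H(\rvy|\rvu)+I(\rvx;\rvu)+H(\rvx|\rvy,\rvu)$ up front, whereas the paper performs the same cancellation at the end), so no substantive gap remains.
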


\begin{proof}
See Appendix~\ref{app:cond-ent}
\end{proof}
\iffalse
\begin{lemma}
Suppose that  ${\rvx^n \stackrel{\mrm{i.i.d.\ }}{\sim} p_\rvx(\cdot)}$ and suppose that $y^n \in \cT_\eps^n(Y)$. For any $\eps' > \eps$ let
\begin{align}
{\mathbb I}(x^n, y^n) = \begin{cases} 1, & (x^n, y^n) \in T_{\eps'}^n(\rvx, \rvy) \\ 0, & \text{else}\end{cases}
\end{align} Then we have that
\begin{multline}
2^{-n(H(\rvx|\rvy)+\delta(\eps))}  \le \Pr(\rvx^n = x^n |{\mathbb I}(\rvx^n, y^n)= 1,y^n) \\ \le 2^{-n(H(\rvx|\rvy)-\delta(\eps))},\label{eq:typicality-bnd-1}
\end{multline} and\begin{multline}
n(H(\rvx|\rvy) - \gamma(\eps)) \le \\ H(\rvx^n | y^n, {\mathbb I}(\rvx^n, y^n)= 1) \le n(H(\rvx|\rvy) + \gamma(\eps)),\label{eq:typicality-bnd-2}
\end{multline}
for some functions $\delta(\eps)$ and $\gamma(\eps)$ that goes to zero as $\eps \rightarrow 0$.
\label{lem:typicality}
\end{lemma}
\fi

We first lower bound the term $H(\rvu_{BA}^K, \rvu_{AB}^K)$. Notice that $\rvu_{BA}^K \rightarrow \hh_{BA}^K \rightarrow \rvu_{AB}^K$ holds.
Since  $\rvu_{BA}^K$ and $\rvu_{AB}^K$ are distributed, nearly uniformly on the codebooks $\cC_A^\rvu$ and $\cC_B^\rvu$ respectively
(see e.g.,~\cite[pp.~660, Lemma 1]{khistiDiggaviWornell:12}) we have that
\begin{align}
H(\rvu_{BA}^K) &\ge K(I(\rvu_{BA}; \hh_{BA}) - \g_K(\eps))\\
H(\rvu_{AB}^K) &\ge K(I(\rvu_{AB}; \hh_{AB})-\g_K(\eps))\label{eq:ub-bnd}
\end{align}
Substituting $\rvx = \hh_{AB}$, $\rvy = \hh_{BA}$ and $\rvu = \rvu_{BA}$ and $R=I(\rvu_{BA};\hh_{AB})+2\eps$ in Lemma~\ref{lem:cond-ent} we have that
\begin{align}
H(\hh_{BA}^K|\rvu_{BA}^K) \ge K\left\{H(\hh_{BA}|\rvu_{BA}) -\g_K(\eps)\right\}.\label{eq:hba-bnd-0}
\end{align}

\iffalse
Furthermore from the Conditional Typicality Lemma~\cite[sec 2.5]{ElGamalKim} we have that $\Pr({\mathbb{I}}(\hh_{BA}^K, \rvu_{AB}^K)=1) \rightarrow 1$
as $K \rightarrow\infty$. Thus we have
\begin{align}
&H(\hh_{BA}^K|\rvu_{AB}^K) \notag\\&\ge \Pr(\mathbb{I}(\hh_{BA}^K, \rvu_{AB}^K)=1) H(\hh_{BA}^K|\rvu_{AB}^K,\mathbb{I}(\hh_{BA}^K, \rvu_{AB}^K)=1) \\
&\ge (1-o_K(1))H(\hh_{BA}^K|\rvu_{AB}^K,\mathbb{I}(\hh_{BA}^K, \rvu_{AB}^K)=1) \\
&\ge K\left\{H(\hh_{BA}| \rvu_{AB}) -\g_K(\eps)\right\} \label{eq:hh-bnd-1}
\end{align}
where we use Lemma~\ref{lem:typicality} in~\eqref{eq:hh-bnd-1} since the sequence $\hh_{BA}^K$ is generated i.i.d.\ and $o_K(1)$
vanishes to zero as $K\rightarrow\infty$.
\fi
We define the indicator function $\mathbb{I}_\eps(\hh_{BA}^K, \rvu_{AB}^K, \rvu_{BA}^K)$ to equal $1$ if $(\hh_{BA}^K, \rvu_{AB}^K, \rvu_{BA}^K) \in \cT_\eps^K$
and zero otherwise.
Using the Markov Lemma~\cite[sec 12.1.1]{ElGamalKim} we have that $\Pr(\mathbb{I}(\hh_{BA}^K, \rvu_{AB}^K, \rvu_{BA}^K)=1) \rightarrow 1$ 
as $K\rightarrow\infty$ for every $\eps>0$. Hence
\begin{align}
&H(\hh_{BA}^K | \rvu_{BA}^K, \rvu_{AB}^K) \le H(\hh_{BA}^K | \rvu_{BA}^K, \rvu_{AB}^K, \mathbb{I}_\eps)+1\\
&\le H(\hh_{BA}^K | \rvu_{BA}^K, \rvu_{AB}^K, \mathbb{I}_\eps=1)\Pr(\mathbb{I}_\eps=1)+\notag\\
&\quad H(\hh_{BA}^K | \rvu_{BA}^K, \rvu_{AB}^K, \mathbb{I}_\eps=0)\Pr(\mathbb{I}_\eps=0)+1\label{eq:hba-bnd-t1}
%&\le K\left\{H(\hh_{BA}| \rvu_{BA}, \rvu_{AB}) + \g(\eps) + o_K(1)\right\} +1\label{eq:hh-bnd}
\end{align}
Now consider
\begin{align}
&H(\hh_{BA}^K | \rvu_{BA}^K, \rvu_{AB}^K, \mathbb{I}_\eps=1)
\le E\left[\log |\cT_\eps^K(\hh_{BA}| \rvu^K_{A}, \rvu^K_B)|\right]\notag\\
&\le K (H(\hh_{BA}|\rvu_{BA}, \rvu_{AB}) + \g_K(\eps))\label{eq:typ-set-bound}
\end{align}where the last step follows from the  bound on the size of a conditionally typical set~\cite[pp.~27]{ElGamalKim}.

Furthermore since $\Pr({\mathbb I}_\eps=0) \rightarrow 0$ as $K\rightarrow \infty$ we have that~\eqref{eq:hba-bnd-t1} simplifies to the following
\begin{align}
H(\hh_{BA}^K | \rvu_{BA}^K, \rvu_{AB}^K) \le K\left\{H(\hh_{BA}|\rvu_{BA}, \rvu_{AB}) +\g_K(\eps) \right\}\label{eq:hba-bnd-1}
\end{align}

We have the following lower bound on $H(\rvu_{BA}^K, \rvu_{AB}^K)$:
\begin{align}
&H(\rvu_{BA}^K, \rvu_{AB}^K) = H(\rvu_{BA}^K | \rvu_{AB}^K) + H(\rvu_{AB}^K)\\
&= H(\rvu_{BA}^K, \hh_{BA}^K|\rvu_{AB}^K) - H(\hh_{BA}^K | \rvu_{BA}^K, \rvu_{AB}^K) + H(\rvu_{AB}^K)\\
&\ge H(\hh_{BA}^K | \rvu_{AB}^K)- H(\hh_{BA}^K | \rvu_{BA}^K, \rvu_{AB}^K) + H(\rvu_{AB}^K)\\
&\ge K\left\{I(\hh_{BA}; \rvu_{BA} | \rvu_{AB}) + I(\hh_{AB}; \rvu_{AB}) - \g_K(\eps)\right\}\label{eq:mut-inf-bnd}\\
&= K\left\{I(\hh_{BA}; \rvu_{BA})\!\! +\!\! I(\hh_{AB}; \rvu_{AB})\!\!-\!\!I(\rvu_{BA};\rvu_{AB})\!\!-\!\!3\g_K(\eps)\right\}\label{eq:uab-bnd}
\end{align}where we use~\eqref{eq:hba-bnd-0} and~\eqref{eq:hba-bnd-1} in~\eqref{eq:mut-inf-bnd}.
% and~\eqref{eq:uab-bnd}
%follows from~\eqref{eq:ub-bnd}.

Now consider 
\begin{align}
&H(\brvbv_A^K, \brvbv_B^K | \rvu_{BA}^K, \rvu_{AB}^K, \brvbz^K, \rvbg^K) \notag\\
&\ge H(\brvbv_A^K, \brvbv_B^K | \rvu_{BA}^K, \rvu_{AB}^K, \brvbz^K, \rvbg^K, \rvh_{AB}^K, \rvh_{BA}^K) \label{eq:cond-t1}\\ 
&=  H(\brvbv_A^K | \rvh_{BA}^K, \brvbz_B^K, \rvg_{BE}^K)+ H(\brvbv_B^K | \rvh_{AB}^K, \brvbz_A^K, \rvg_{AE}^K)\label{eq:mut-indep-xaxb}
\end{align}
where~\eqref{eq:cond-t1} follows from the fact that conditioning reduces entropy and~\eqref{eq:mut-indep-xaxb} follows from the
fact that $(\brvbx_A^K, \brvbx_B^K)$ are generated mutually independently and hence the associated Markov chain holds.
We lower bound each of the two terms in~\eqref{eq:mut-indep-xaxb}
\begin{align}
&H(\brvbv_A^K | \rvh_{BA}^K, \brvbz_B^K, \rvg_{BE}^K)\notag\\ &= H(\rvh_{BA}^K, \brvbz_B^K, \rvg_{BE}^K |\brvbv_A^K) + H(\brvbv_A^K) - H( \rvh_{BA}^K, \brvbz_B^K, \rvg_{BE}^K)\notag\\
&=H(\rvh_{BA}^K, \brvbz_B^K, \rvg_{BE}^K |\brvbv_A^K) + H(\brvbv_A^K) - KH( \rvh_{BA}, \brvbz_B, \rvg_{BE}) \label{eq:src-iid}\\
&\ge H(\rvh_{BA}^K, \brvbz_B^K, \rvg_{BE}^K |\brvbv_A^K) + \notag\\ &\qquad KI(\brvbv_A;\brvby_A) - KH( \rvh_{BA}, \rvbz_B, \rvg_{BE}) -K\g_K(\eps)\label{eq:va-bnd}\\
&\ge K\bigg\{H(\rvh_{BA},\brvbz_B,\rvg_{BE}|\brvbv_A ) + \notag\\ &\qquad I(\brvbv_A;\brvby_A) - H( \rvh_{BA}, \brvbz_B, \rvg_{BE}) -\g_K(\eps) \bigg\}\label{eq:va-bnd-1}\\
&=K\left\{I(\brvbv_A;\brvby_A)-I(\brvbv_A;\brvbz_B,\rvh_{BA}, \rvg_{BE})-\g_K(\eps)\right\}\label{eq:va-bnd-2}
\end{align}where~\eqref{eq:src-iid} follows from the fact that $(\rvh_{AB}^K,\brvbz_B^K,\rvg_{BE}^K)$ are sampled i.i.d.\ 
and~\eqref{eq:va-bnd} follows from the fact that $\brvbv_A^K$ is distributed nearly uniformly over the set $\cC_A^\rvv$
(see e.g.,~\cite[pp.~660, Lemma 1]{khistiDiggaviWornell:12}) and~\eqref{eq:va-bnd-1} follows from Lemma~\ref{lem:cond-ent} by substituting
${\rvu=\brvbv_A}$, ${\rvy = (\rvh_{BA},\brvbz_B,\rvg_{BE})}$, ${\rvx=\rvby_A}$ and ${R = I(\brvby_A;\brvbv_A) +2\eps}$.
In a similar manner we can show that
\begin{align}
&H(\brvbv_B^K | \rvh_{AB}^K, \brvbz_A^K, \rvg_{AE}^K) \notag\\ &\ge K\left\{I(\brvbv_B;\brvby_B)-I(\brvbv_B;\brvbz_A,\rvh_{AB},\rvg_{AE})-\g_K(\eps)\right\} \label{eq:vb-bnd-2}
\end{align}

Substituting~\eqref{eq:bin-bnd-1}-\eqref{eq:bin-bnd-2},~\eqref{eq:uab-bnd},~\eqref{eq:va-bnd-2} and~\eqref{eq:vb-bnd-2} into
~\eqref{eq:u-x-indep} we have established that
\begin{align}
&\frac{1}{K}H(\rvu_{BA}^K,\rvu_{AB}^K, \brvbv_A^K, \brvbv_B^K | \phi_A, \phi_B, \psi_A, \psi_B, \brvbz^K, \rvbg^K) \ge \notag\\
&I(\hh_{AB}; \rvu_{BA}) + I(\hh_{BA}; \rvu_{AB})-I(\rvu_{BA};\rvu_{AB}) + \notag\\
&I(\brvbv_B; \brvbx_A, \rvbu)  - I(\brvbv_B; \rvh_{AB},\rvg_{AE}, \brvbz_A) + \notag\\
&I(\brvbv_A;\brvbx_B,\rvbu)  - I(\brvbv_A; \rvh_{BA},\rvg_{BE}, \brvbz_B)-\g_K(\eps).\label{eq:lb-t1}
\end{align}

Now observe that
\begin{align}
&I(\brvbv_B; \brvbx_A, \rvbu)  - I(\brvbv_B; \rvh_{AB},\rvg_{AE}, \brvbz_A)  \notag\\
&=I(\brvbv_B; \brvbx_A, \rvbu)  - I(\brvbv_B; \rvh_{AB},\rvg_{AE}, \brvbz_A,\rvbu) \label{eq:u-markov1}\\
&=H(\brvbv_B | \rvh_{AB},\rvg_{AE}, \brvbz_A,\rvbu) - H(\brvbv_B | \brvbx_A, \rvbu)\notag\\
&=(T-1)\left\{H(\rvv_B | \rvh_{AB}, \rvg_{AE}, \rvz_A, \rvbu) - H(\rvv_B| \rvx_A, \rvbu)\right\}\label{eq:iid-xy}\\
&=(T-1)\left\{I(\rvv_B; \rvx_A, \rvbu) - I(\rvv_B; \rvh_{AB}, \rvg_{AE}, \rvz_A, \rvbu)\right\}\notag\\
&=(T-1)\left\{I(\rvv_B; \rvx_A, \rvbu) - I(\rvv_B; \rvh_{AB}, \rvg_{AE}, \rvz_A)\right\}\label{eq:u-markov2}
\end{align}
where~\eqref{eq:iid-xy} follows from the fact that $\brvbx_A \in {\mathbb C}^{T-1}$ is sampled i.i.d.\ $p_{\rvx_A}(\cdot)$
~\eqref{eq:u-markov1} and~\eqref{eq:u-markov2} follow from the fact that $\rvbu\rightarrow (\brvbz_A, \rvg_{AE}, \rvh_{AB}) \rightarrow \brvby_B \rightarrow \brvbv_B$ holds.

In a similar way we can show that
\begin{align}
&I(\brvbv_A;\brvbx_B,\rvbu)  - I(\brvbv_A; \rvh_{BA},\rvg_{BE}, \brvbz_B) \notag\\
&=(T-1)\left\{I(\rvv_A; \rvx_B, \rvbu) - I(\rvv_A; \rvh_{AB},\rvg_{BE}, \rvz_B)\right\} \label{eq:lb-t3}
\end{align}
By substituting~\eqref{eq:u-markov2} and~\eqref{eq:lb-t3} into~\eqref{eq:lb-t1} we have that
\begin{align}
&\frac{1}{N}H(\rvu_{BA}^K,\rvu_{AB}^K, \brvbv_A^K, \brvbv_B^K | \phi_A, \phi_B, \psi_A, \psi_B, \brvbz^K, \rvbg^K) \ge \notag\\
&\frac{1}{T}\left\{I(\hh_{AB}; \rvu_{BA}) + I(\hh_{BA}; \rvu_{AB})-I(\rvu_{BA};\rvu_{AB})\right\} + \notag\\
&\frac{(T-1)}{T}\left\{I(\rvv_B; \rvx_A, \rvbu) - I(\rvv_B; \rvh_{AB}, \rvg_{AE}, \rvz_A)\right\} + \notag\\
&\frac{(T-1)}{T}\left\{I(\rvv_A; \rvx_B, \rvbu) - I(\rvv_A; \rvh_{AB},\rvg_{BE}, \rvz_B)\right\} \g_K(\eps)\notag\\
&\defeq R_\mrm{eq}- \g_{K}(\eps)
\label{eq:Rnodisc-LB}
\end{align}

\subsection{Secret-Key Generation}
\label{subsec:sec-key-gen}
It remains to show that a secret-key can be generated at the legitimate terminals that achieves a rate $R_\mrm{eq}$ in~\eqref{eq:Rnodisc-LB}
and  satisfies the secrecy constraint.

We consider $M$ i.i.d.\ repetitions of the coding scheme in the previous section. Let \begin{align}\O^M = (\rvbv_A^{M\cdot K}, \rvbv_B^{M\cdot K}, \rvu_{AB}^{M\cdot K}, \rvu_{BA}^{M\cdot K} )\end{align} denote the common sequence at the legitimate terminals after $M$ such repetitions. Note that $\O^M$ is sampled i.i.d.\ from the joint distribution $p(\rvbv_A^K, \rvbv_B^K, \rvu_{AB}^K,\rvu_{BA}^K)$.  Similarly let
\begin{align}
\T^M \defeq \left(\phi_A^M, \phi_B^M, \psi_A^M, \psi_B^M, \brvbz^{M\cdot K}, \rvbg^{M\cdot K}\right)
\end{align}
denote the observation sequence at the eavesdropper which is also sampled i.i.d.\ and furthermore from~\eqref{eq:Rnodisc-LB} we have that
\begin{align}
\frac{1}{K}H(\O | \T) \ge R_\mrm{eq} - \g_{K,\eps}. \label{eq:equiv-lb}
\end{align}

Given the common sequence $\O^M$ at the legitimate receivers and the sequence $\T^M$ at the eavesdropper, sampled i.i.d.\ from a joint distribution, for any rate $R < H(\O | \T)$ and $\delta > 0$ there exists a secret-key codebook  (see e.g.,\cite[sec.~22.3.2, pp.~562-563]{ElGamalKim}) ~such that the  secret-key $\rvk$ satisfies
\begin{align*}
\lim_{M \rightarrow \infty} \frac{1}{M K} H(\rvk) \ge (R - \delta),\quad
\lim_{M\rightarrow \infty}\frac{1}{M K}I(\rvk; \T^M) \le \delta.
\end{align*}

Furthermore for each $M$, by selecting $K$ sufficiently large, we can make $\g_{K,\eps}$  sufficiently small and also guarantee that the error probability in decoding $\O^M$ goes to zero.  Thus we can achieve a secret-key rate arbitrarily close to $R_\mrm{eq}$. This completes the proof of our lower bound.

\subsection{Proof of Prop.~\ref{thm:lb}}

\label{subsec:nodisc-G}
We  evaluate the rate expression in Theorem~\ref{thm:lb-nodisc} for jointly Gaussian input distribution. 
Recall that $\hh_{AB}$ and $\hh_{BA}$ are the MMSE estimates of the channel gains
$\rvh_{AB}$ and $\rvh_{BA}$ respectively. Hence we can decompose,
\begin{align}
\rvh_{AB} = \hh_{AB} + \rve_{AB},\quad
\rvh_{BA} = \hh_{BA} + \rve_{BA}
\end{align}
where $\hh_{AB} \sim \CN(0,\al)$,  $\hh_{BA} \sim \CN(0,\al)$ are the estimates
and  $\rve_{AB} \sim \CN(0,1-\al),$  $\rve_{BA} \sim \CN(0,1-\al)$ are the estimation errors
and where $\al = \frac{P_1}{P_1+1}$. For the test channels in~\eqref{eq:u-test}
we can show the following:
{\allowdisplaybreaks{\begin{align}
&I(\rvu_{AB}; \hh_{AB}) = I(\rvu_{BA}; \hh_{BA}) = \log\left(1 + \frac{\al}{Q_1}\right)\label{eq:uAB_hAB}\\
&I(\rvu_{AB}; \hh_{BA})\!\!= \!\!I(\rvu_{BA};\hh_{AB}) \!\!= -\log\left(1- \frac{\al^2 \rho^2}{1+\frac{Q_1}{\al}}\right)\label{eq:uAB_hBA}\\
&I(\rvu_{AB};\hh_{AB}) - I(\rvu_{AB}; \hh_{BA}) \notag \\ &\qquad=\log\left(1 + \frac{\al(1-\al^2\rho^2)}{Q_1}\right) \notag \\
&\qquad\quad= I(\rvu_{BA}; \hh_{BA})- I(\rvu_{BA}; \hh_{AB}) \label{eq:uh-diff}\\
&I(\rvu_{AB}; \rvu_{BA}) = -\log\left(1 - \frac{\al^2 \rho^2}{\left(1+ \frac{Q_1}{\al}\right)^2}\right)
\end{align}}}

Substituting in~\eqref{eq:RI},~\eqref{eq:Rc-u1} and~\eqref{eq:Rc-u2} we have that
\begin{align}
&R_I \!= \!\!\frac{1}{T}\bigg\{I(\rvu_{AB}; \hh_{BA}) \!\!+ \!\! I(\rvu_{BA}; \hh_{AB}) \!\!- \!\!I(\rvu_{AB}; \rvu_{BA})\bigg\}\notag\\
&= \frac{1}{T}\left\{\!\!-2\log\left(1-\frac{\al^2 \rho^2}{1+\frac{Q_1}{\al}}\right)\!\! +\!\! \log\left(1-\frac{\al^2 \rho^2}{\left(1+\frac{Q_1}{\al}\right)^2}\!\!\right)  \right\}
\label{eq:Ra}
\end{align}
where $Q_1$ is selected to satisfy:
\begin{align}
\log\left(1 + \frac{\al(1-\al^2\rho^2)}{Q_1}\right) \le \eps_1 (T-1)R_\mrm{NC}(P).
\label{eq:Ra-cons}
\end{align}

We next obtain expression for $R_\mrm{BA}^-$ using the test channel~\eqref{eq:v-test}.
\begin{align}
&R_\mrm{BA}^- = I(\rvv_A; \rvx_B,\rvu_{AB},\rvu_{BA})-I(\rvv_A; \rvh_{AB}, \rvz_B, \rvg_{BE})\\
&=h(\rvv_A|\rvh_{AB},\rvz_B, \rvg_{BE}) -  h(\rvv_A|\rvu_{AB},\rvu_{BA}, \rvx_B) \label{eq:Rba-lb1}
\end{align}
we bound each of the two entropy terms as follows. 
\begin{align}
&h(\rvv_A| \rvh_{AB}, \rvz_B, \rvg_{BE})\\
&=E\left[\log\left(1 + Q_2 + \frac{P_2 |\rvh_{BA}|^2}{1+P_2 |\rvg_{BE}|^2}\right)\right] + \log(2\pi e)\label{eq:Rba-lb2}
\end{align}
and furthermore
\begin{align}
&h(\rvv_A|\rvu_{AB},\rvu_{BA}, \rvx_B) \notag \\&\le h(\tilde{\rvh}_{BA} \rvx_B + \rvq_{BA} + \rvn_{BA})\\
&\le \log\left(\sigma_{AB}^2 P_2 + Q_2 + 1\right) + \log (2\pi e)\label{eq:Gauss_UB}.
\end{align} where $$\tilde{\rvh}_{BA} =\rvh_{BA} - \hat{\rvh}_{BA}$$ is the error in the MMSE estimate of $\rvh_{BA}$ given $(\rvu_{AB}, \rvu_{BA})$
and $\sigma_{AB}^2$ is the associated estimation error
\begin{align}
\sigma_{AB}^2 = E[(\rvh_{AB}- \hat{\rvh}_{AB})^2] 
\end{align}The upper bound in~\eqref{eq:Gauss_UB} follows from the fact that a 
Gaussian input maximizes the differential entropy for a given noise variance. 
The expression in~\eqref{eq:R-AB} follows by substituting~\eqref{eq:Rba-lb2}
and~\eqref{eq:Gauss_UB}  into~\eqref{eq:Rba-lb1}.
In a similar way we can establish~\eqref{eq:R-BA}.
The  rate constraint in~\eqref{eq:Rc-v1} and~\eqref{eq:Rc-v2} follow by substituting~\eqref{eq:v-test} into
\begin{align}
I(\rvv_A; \rvy_A| \rvx_A, \rvu_{AB}, \rvu_{BA}) \le \eps_2 R_\mrm{NC}(P).\label{eq:Rb-cons-3}
\end{align}
Note that
\begin{align}
&I(\rvv_A; \rvy_A| \rvx_A, \rvu_{AB}, \rvu_{BA}) \notag \\
&= h(\rvv_A | \rvx_A, \rvu_{AB}, \rvu_{BA})- h(\rvq_{BA})\\
&\le \log\left(1 + \frac{\sigma_{AB}^2 P_2 + 1}{Q_1} \right).\label{eq:Rb-cons-2}
\end{align} The expression~\eqref{eq:Rb-cons}  follows from~\eqref{eq:Rb-cons-3} and~\eqref{eq:Rb-cons-2}.

Finally the bound on $\sigma_{AB}^2$  in~\eqref{eq:sigma} is obtained as follows
\begin{align}
\sigma_{AB}^2 &= E[(\rvh_{AB}- \hh_{AB}(\rvu_{AB}, \rvu_{BA}))^2]\notag \\
&\le E[(\rvh_{AB}-\hh_{AB}(\rvu_{AB}))^2]\le
1- \frac{\al^2}{Q_1 + \al}
\end{align}
This completes the proof of Prop.~\ref{thm:lb}.

\section{Proof of Corollary~\ref{corol:highSNR} }
\label{sec:cap-highSNR}
We establish~\eqref{eq:c1} and~\eqref{eq:c2} respectively. 

We consider the expression for $R^+$ in Theorem~\ref{thm:ub} and show that
\begin{align}
\lim_{P\rightarrow\infty}R^+ = -\frac{1}{T}\log(1-\rho^2) + \g \label{eq:highsnr-ub}
\end{align}
where $\g$ is defined in~\eqref{eq:g-def}. In particular note that the first term in~\eqref{eq:R+} is independent of $P$ and furthermore since $\rvh_{AB}$ and $\rvh_{BA}$ are jointly Gaussian, zero mean, unit variance and with a cross-correlation of $\rho$ it can be readily shown that
\begin{align}
I(\rvh_{AB};\rvh_{BA}) = -\log(1-\rho^2).
\end{align} Only the second and third terms depend on $P$. For any power allocation $P(\rvh_{AB})$ note that
\begin{align}
E\left[\log\left(1 +\frac{P(\rvh_{AB})|\rvh_{AB}|^2}{1+P(\rvh_{AB})|\rvg_{AE}|^2}\right)\right] \le E\left[\log\left(1 + \frac{|\rvh_{AB}|^2}{|\rvg_{AE}|^2}\right)\right]\label{eq:highsnr-ub-1}
\end{align}
which follows since the function $f(x) = \frac{ax}{1+bx}$ is increasing in $x$ for any $a, b >0$. Similarly we have
\begin{align}
E\left[\log\left(1 + \frac{P(\rvh_{BA})|\rvh_{BA}|^2}{1+P(\rvh_{BA})|\rvg_{BE}|^2}\right)\right] \le E\left[\left(1+\frac{|\rvh_{BA}|^2}{|\rvg_{BE}|^2}\right)\right].
\label{eq:highsnr-ub-2}
\end{align} 
Furthermore since a Gaussian input distribution maximizes the conditional mutual information (c.f.~\eqref{eq:gauss-opt}), the upper bound follows by substituting~\eqref{eq:highsnr-ub-1} and~\eqref{eq:highsnr-ub-2} into~\eqref{eq:R+}. 

We next show that,
\begin{align}
\lim_{P\rightarrow\infty}R_\mrm{PD}^- = -\frac{1}{T}\log(1-\rho^2) + \frac{T-1}{T}\g.\label{eq:highsnr-lb-disc}
\end{align}
In particular we consider the secret-key rate expression~\eqref{eq:pdisc} in Prop.~\ref{prop:pdisc}. We select $P_2 = \frac{\sqrt{P}}{T-1}$
and $P_1 = P-\sqrt{P}$. Note that as $P \rightarrow\infty$ we have that $P_1, P_2 \rightarrow \infty$ and $\frac{P_2}{P_1}\rightarrow 0$.
In particular, since $\al = \frac{P_1}{P_1+1}$ we have that
\begin{align}
&\lim_{P_1\rightarrow\infty} -\log(1-\al^2\rho^2) = -\log(1-\rho^2),\label{eq:highsnr-lbdisc-t1}\\
&\lim_{P\rightarrow\infty} \log\left(1+\frac{P_2}{1+P_1}\right)=0. \label{eq:highsnr-penalty-lbdisc}
\end{align}
Furthermore since the function $f(x) = \frac{ax}{1+bx}$ is bounded for all $a, b > 0$ we can apply the Dominated Convergence Theorem,
\begin{align}
\lim_{P_2 \rightarrow\infty}\! E\left[\log\left(1 + \frac{P_2|\rvh_{AB}|^2}{1+ P_2 |\rvg_{AE}|^2}\right)\right] \!\!=\!\! E\left[\log\left(1 + \frac{|\rvh_{AB}|^2}{|\rvg_{AE}|^2}\right)\right] \label{eq:highsnr-c1-lbdisc}\\
\lim_{P_2 \rightarrow\infty} \! E\left[\log\left(1 + \frac{P_2|\rvh_{BA}|^2}{1+ P_2 |\rvg_{BE}|^2}\right)\right] \!\!=\!\! E\left[\log\left(1 + \frac{|\rvh_{BA}|^2}{|\rvg_{BE}|^2}\right)\right] \label{eq:highsnr-c2-lbdisc}
\end{align}
Substituting~\eqref{eq:highsnr-lbdisc-t1},~\eqref{eq:highsnr-penalty-lbdisc},~\eqref{eq:highsnr-c1-lbdisc},~\eqref{eq:highsnr-c2-lbdisc} into
\eqref{eq:pdisc}, \eqref{eq:Rp_T}, \eqref{eq:Rp_AB}, \eqref{eq:Rp_BA} we recover~\eqref{eq:highsnr-lb-disc}.
Comparing~\eqref{eq:highsnr-lb-disc} and~\eqref{eq:highsnr-ub} we establish~\eqref{eq:c1}. 

To establish~\eqref{eq:c2} we consider~\eqref{eq:R-} in Prop.~\ref{thm:lb} and show that with a suitable choice of $Q_1,$ $Q_2,$
$P_1$ and $P_2$ we have that
\begin{align}
\lim_{P\rightarrow\infty} R^- = -\frac{1}{T} \log(1-\rho^2) + \frac{T-1}{T}\g. \label{eq:highsnr-lb-nodisc}
\end{align}
In particular we select $P_2 = \frac{\sqrt{P}}{T-1}$ and $P_1 = P-\sqrt{P}$. Furthermore we select
\begin{align}
Q_1 &= \frac{\al(1-\al^2\rho^2)}{\eps_1 (T-1) R_\mrm{NC}(P)}\\
Q_2 &=\frac{\sigma^2_{AB}P_2+1}{\eps_2 R_\mrm{NC}(P)}
\end{align}
Note that as $P\rightarrow \infty$, it is well known that $R_\mrm{NC}(P)\rightarrow \infty$.  Thus we can let $\eps_1$ and $\eps_2$
to be any functions of P such that $\eps_1(P), \eps_2(P) \rightarrow 0$ and $\eps_i(P) R_\mrm{NC}(P) \rightarrow \infty$.  Then observe that 
$Q_1, Q_2 \rightarrow 0$ as $P \rightarrow \infty$
It therefore follows that\begin{align}
\lim_{P\rightarrow\infty}R_I(\eps_1, P) = -\frac{1}{T}\log(1-\rho^2),\label{eq:highsnr-lb-nodisc0}\\
\lim_{P\rightarrow\infty}\log\left(1+\frac{\sigma_{AB}^2 P_2}{1+Q_2}\right)=0.
\end{align}
and hence we have
\begin{align}
&\lim_{P \rightarrow\infty} R_\mrm{AB}^- =  E\left[\log\left(1 + \frac{|\rvh_{AB}|^2}{|\rvg_{AE}|^2}\right)\right]\label{eq:highsnr-lb-nodisc1}\\
&\lim_{P\rightarrow\infty} R_\mrm{BA}^- = E\left[\log\left(1 + \frac{|\rvh_{BA}|^2}{|\rvg_{BE}|^2}\right)\right]\label{eq:highsnr-lb-nodisc2}
\end{align}
By substituting~\eqref{eq:highsnr-lb-nodisc0},~\eqref{eq:highsnr-lb-nodisc1} and~\eqref{eq:highsnr-lb-nodisc2} into~\eqref{eq:R-} we obtain~\eqref{eq:highsnr-lb-nodisc}.

\section{Conclusion}

We study secret-key agreement capacity over  a two-way, non-coherent, reciprocal, block-fading channel. Our main observation is that a separation based scheme that judiciously combines channel-training and randomness-sharing techniques is an efficient approach when the  coherence block length is large and the high signal-to-noise is high. Numerical results indicate that even when the overhead of transmitting public messages is accounted for, the proposed scheme  outperforms a idealized training-only schemes for moderate SNR and relatively small coherence block-lengths. These observations suggest that when the channel fluctuations are relatively slow, one should not rely on channel reciprocity alone for secret-key generation. We also establish an upper bound on the secret-key agreement capacity and show that the upper and lower bounds coincide asymptotically.

In terms of future work a number of interesting directions remain to be explored. The secret-key agreement capacity in the low signal to noise ratio regime remains to be studied.  Potentially improved upper and lower bounds for moderate SNR may be also obtained. While our upper bound and lower bounds naturally extend to the case of multiple antennas a detailed analysis is left for future work. We note that some preliminary investigation along these lines has appeared in~\cite{matias}.

\appendices

\section{Proof of Lemma~\ref{lem:cond-ent}}
\label{app:cond-ent}
We lower bound the conditional entropy term as follows
\begin{align}
&H(\rvy^n|\rvu^n(L))  = H(\rvy^n, \rvx^n|\rvu^N(L)) -H(\rvx^n|\rvy^n,\rvu^n(L))\\
&\ge H(\rvy^n,\rvx^n)-H(\rvu^n(L))  -H(\rvx^n|\rvy^n,\rvu^n(L))\\
&=nH(\rvy,\rvx)  -H(\rvu^n(L)) - H(\rvx^n|\rvy^n,\rvu^n(L))\label{eq:xy-indep}\\
&\ge nH(\rvy,\rvx)  - nR - H(\rvx^n|\rvy^n,\rvu^n(L))\label{eq:u-bnd}
\end{align}
where~\eqref{eq:xy-indep} follows from the fact that $(\rvx^n,\rvy^n)$ are sampled independently of $\cA$
and~\eqref{eq:u-bnd} follows from the fact that $|\cA| =2^{nR}$. It only remains to upper bound
the conditional entropy term in~\eqref{eq:u-bnd}. Let ${\mathbb I}_{\eps'}$ denote an indicator function
that equals $1$ if $(\rvx^n,\rvy^n,\rvu^n) \in T_{\eps'}^n(\rvx,\rvy,\rvu)$ and equals zero otherwise. From
the conditional typicality lemma we have that $\Pr({\mathbb{I}}_{\eps'}=1) \rightarrow 1$ as $n\rightarrow \infty$,
for any $\eps' > \eps$.

\begin{align}
&H(\rvx^n|\rvy^n,\rvu^N(L)) \le 1 + H(\rvx^n|\rvy^n,\rvu^N(L), {\mathbb I})  \\
&=1 + H(\rvx^n|\rvy^n,\rvu^N(L), {\mathbb I}=1) \Pr({\mathbb I}=1) \notag\\ &\qquad+ H(\rvx^n|\rvy^n,\rvu^N(L), {\mathbb I}=0) \Pr({\mathbb I}=0)\\
&\le 1 + H(\rvx^n|\rvy^n,\rvu^N(L), {\mathbb I}=1) + n H(\rvx)\Pr({\mathbb I}=0)\label{eq:I-zero}\\
&\le 1 + E[\log |\cT_{\eps'}^n(\rvx|\rvu^n, \rvy^n)|]+ n H(\rvx)\Pr({\mathbb I}=0)\label{eq:typ-set}\\ 
&\le 1 + nH(\rvx|\rvu,\rvy) + n\g(\eps) + n\g_n(\eps)\label{eq:cond-lb-2}
\end{align}
where we use the fact that for any $\eps>0$, $\Pr({\mathbb I}=0) \rightarrow 0$ as $n\rightarrow\infty$ in~\eqref{eq:I-zero}, which follows since $R > I(\rvx; \rvu)$.  Combining~\eqref{eq:u-bnd} and~\eqref{eq:cond-lb-2} we get
\begin{align}
&H(\rvy^n|\rvu^n(L)) \ge n\left\{H(\rvy,\rvx)- H(\rvx|\rvy,\rvu) -R -\g_n(\eps)\right\}\\
&= n\left\{H(\rvy|\rvx) + H(\rvx) - H(\rvx|\rvu) +I(\rvx;\rvy|\rvu) - R - \g_n(\eps)\right\}\\
&=n\left\{H(\rvy|\rvu) + I(\rvx;\rvu)-R - \g_n(\eps)\right\},
\end{align}
where the last step uses the fact that $\rvu \rightarrow \rvx \rightarrow \rvy$.

\bibliographystyle{IEEEtran} 
\bibliography{sm}

\end{document}